\newtheorem{lemma}{Lemma}
\newtheorem{corollary}{Corollary}
\begin{document}

\title{Tradeoff between Ergodic Rate and Delivery Latency in Fog Radio Access Networks }

\author{Bonan Yin, Mugen Peng,~\IEEEmembership{Fellow,~IEEE}, Shi Yan,~\IEEEmembership{Member,~IEEE} and Chunjing Hu\\
\thanks{This work was supported in part by the National Natural Science Foundation of China under No. 61671074 and 61831002, State Major Science and Technology Special Project under 2018ZX03001023, and 2018ZX03001025.  \emph{(Corresponding author: Mugen Peng, Shi Yan.)}

The authors are with the State Key Laboratory of Networking and Switching Technology, Beijing University of Posts and Telecommunications, Beijing 100876, China(e-mail: yinbonan@bupt.edu.cn; pmg@bupt.edu.cn; yanshi01@bupt.edu.cn; cjhu@bupt.edu.cn).}}

\maketitle
\begin{abstract}
Wireless content caching has recently been considered as an efficient way in fog radio access networks (F-RANs) to alleviate the heavy burden on capacity-limited fronthaul links and reduce delivery latency. In this paper, an advanced minimal delay association policy is proposed to minimize latency while guaranteeing spectral efficiency in F-RANs. By utilizing stochastic geometry and queueing theory, closed-form expressions of successful delivery probability, average ergodic rate, and average delivery latency are derived, where both the traditional association policy based on accessing the base station with maximal received power and the proposed minimal delay association policy are concerned. Impacts of key operating parameters on the aforementioned performance metrics are exploited. It is shown that the proposed association policy has a better delivery latency than the traditional association policy. Increasing the cache size of fog-computing based access points (F-APs) can more significantly reduce average delivery latency, compared with increasing the density of F-APs. Meanwhile, the latter comes at the expense of decreasing average ergodic rate. This implies the deployment of large cache size at F-APs rather than high density of F-APs can promote performance effectively in F-RANs.
\end{abstract}

\begin{IEEEkeywords}
Fog radio access networks, wireless content caching, stochastic geometry, queueing theory
\end{IEEEkeywords}
\section{Introduction}

Fundamental shift from voices and messages to rich media applications makes data traffic increase rapidly and quality of experience (QoE) restrictive, which puts extreme pressure on the current radio access networks (RANs). The total throughput of mobile networks in 2020 is expected to become 1000-fold larger than that in 2010 [1]. Meanwhile, key performance indicators are realized in the fifth generation (5G) cellular network, including 1000-fold area capacity, 100-fold energy efficiency (EE), 10-20 Gbps peak rate, milliseconds end-to-end latency, trillions of devices connections, as well as ultra reliability [2].

With the vision of meeting demands of huge capacity and massive connections, a huge number of small cell base stations (BSs) need to be additionally deployed in existing cellular network, which results in increasingly high capital expenditure (CAPEX) and operation expenditure (OPEX), as well as high energy consumption. Meanwhile, spectral efficiency (SE) is reduced severely due to the increasing inter-cell interference. Motivated by the reduction of both CAPEX and OPEX as well as the improvement of SE and EE, cloud radio access network (C-RAN) is presented as a potential solution and thus has attracted widespread attention in both industry and academic community. Through splitting the functions of BSs into centralized baseband unit (BBU) pool and remote radio heads (RRHs), collaboration radio signal processing (CRSP) and cooperative radio resource management (CRRM) can be realized, which improve SE in an energy-efficiency and scalable way. Meanwhile, utilizing the dense deployment of RRHs instead of small cell BSs, trillions of devices are envisioned to be provided with fantastic service quality, in which massive connections can be efficiently supported. However, centralized C-RANs put all functions of the air interface at BBU pool, resulting in long latency and low reliability with capacity-limited fronthaul links.

Edge caching is proposed as a promising technique to support ultra-low latency and high reliability requirements in 5G, especially for applications such as augmented reality/virtual reality (AR/VR) and vehicular communications. In particular, with the concept of decentralization, by prefetching popular contents during off-peak time at the edge of wireless networks, caching can alleviate peak-hour network congestion, mitigate heavy burden on the capacity-limited fronthaul links, and improve users' QoEs. Furthermore, deploying caches closer to the edge of networks, more gains of latency and reliability can be achieved. Nevertheless, edge nodes design transmitted radio signals based only on the local caches in a distributed way, which hinders the advantage of collaborative processing in BBU pool.

Motivated by the integration of benefits for both centralized C-RAN architecture and edge caching, a cache-enabled fog radio access network (F-RAN) has been presented to reduce latency and improve reliability while ensuring sufficiently high SE [3].
In F-RANs, through equipped with storage and computing capacities, traditional access points (APs) and user equipments (UEs) are evolved to the fog-computing based access points (F-APs) and fog-computing based user equipments (F-UEs) respectively, which contributes to execute local cooperative signal processing at the edge of networks. Meanwhile, CRSP and CRRM functionalities can also be performed in centralized BBU pool, which ensures acceptable SE in a cost-efficiency way. Due to the aforementioned embedded qualities, users' services can both be executed in a centralized form at BBU pool through fronthaul links, and be executed in a distributed form at local F-APs or F-UEs at the edge of networks. Since a part of users no longer directly access to BBU pool through fronthaul links, the heavy burdens on both capacity-limited fronthaul links and BBU pool are alleviated, and delivery latency can be reduced significantly as well.

\subsection{Related Work}

The system architecture and key techniques of F-RANs have been proposed in [3]. Particularly, different caching strategies for F-RANs, such as cache most popular, cache distinct, and fractional cache distinct, has been discussed in [4], which are suitable for centralized request, discrete request, and capacity-limited fronthaul scenario, respectively. Meanwhile, significant gains of SE [5], EE [6] and latency [7] have been obtained in cache-enabled F-RANs. The major reason is that requested service contents are delivered to users with limited edge caches, without passing through capacity-limited fronthaul or backhaul links, contributing to a little of energy consumption and a low delivery latency. On the other hand, nearer device-to-device (D2D) users or F-APs bring higher signal to interference plus noise ratio (SINR) to users and thus providing high SE and EE. Unfortunately, the theoretical performance analysis of outage probability, ergodic rate and delivery latency in cache-enabled F-RANs is still not straightforward, which constrains the further investigation of F-RANs with cache.

Actually, there have been series of publications focusing on the theoretical analysis of outage probability and ergodic rate in traditional cellular networks. Utilizing the tool of stochastic geometry, the closed-form expressions of outage probability and average delivery rate in small base stations (SBSs) have been derived in [8], where SBSs are distributed according to homogeneous poisson point process (PPP). In [9], the closed-form expression of outage probability by jointly considering spectrum allocation and storage constraints has been derived in a two-tier PPP-based HetNets. A general expression of successful delivery probability in \emph{N}-tier HetNets has been derived in [10], where the density and cache size of BSs have significant impacts on successful delivery probability. The authors have presented analytical results on successful delivery probability based on different transmission schemes in a cluster-centric small cell network, and contributed to analyze the tradeoff between transmission diversity and content diversity in [11]. In addition, the authors have investigated the coverage probability and ergodic rate in a PPP-based F-RANs with D2D users in [12][13].

It is worth noting that one of the most challenging requirements in 5G is ultra-low end-to-end latency. However, with the assist of caching technique, it is still hard to find the accurate theoretical analysis of transmit latency, as well as the tradeoff between transmit latency and other traditional performance metrics with the impacts of key operating parameters in the existing publications. Meanwhile, well-known comprehensive evaluation metrics need to be proposed, which may highlight the characterization of latency.

Recently, there have been some comprehensive evaluations with regard to latency performance. In [13], tractable expressions of the effective capacity, which reflects both latency and capacity performance, are derived in PPP-based C-RANs. Normalized delivery time is presented from the aspect of fundamental information-theoretic, which reflects the interplay between cloud processing and edge caching in F-RANs [14]. Nevertheless, these aforementioned metrics have not been widely recognized, and need to be further researched and clarified in cache-enabled F-RANs. At the same time, more comprehensive evaluation metric is expected to analyze the network performance under 5G and beyond systems. On the other hand, by using stochastic geometry and queueing theory, average ergodic rate, outage probability and transmit latency are derived in a 3-tier PPP-based HetNet in [15]. In cluster-centric HPPP-based F-RANs, explicit expressions of ergodic rate and transmit latency are provided with a hierarchical content caching scheme [16]. However, research on transmit latency performance with the assist of caching is still in its infancy, especially for F-RANs. There is an urgent need to derive and analyze the closed-form or asymptotic solutions of the theoretical performance under different evolutionary F-RANs to reveal the intrinsic correlation of ergodic rate, transmit latency, connections, and etc.

\subsection{Contributions}

Since the tradeoff between ergodic rate and delivery latency in cache-enabled F-RANs is still not straightforward, characteristic of F-RANs and the corresponding performance analysis have been concerned in this paper. Meanwhile, key factors impacting on alleviating capacity constraints of fronthaul links and reducing latency have been exploited. The main contributions are summarized as follows:

\begin{itemize}
\item In order to highlight the metric of delivery latency and take advantage of edge caching, an advanced minimal delay association policy is presented to minimize latency while guaranteeing SE in F-RANs, where users are associated with the desired BS according to delivery latency. Particularly, the delivery latency can be converted to requested SINR at the corresponding BS on the basis of caching state. Furthermore, owing to the fronthaul and backhaul link latency, users need quite large SINR to associate with RRHs and F-APs without requested contents, resulting in that users prefer to access to F-APs with requested contents. As a result, users are associated according to SINR as well as caching state at the corresponding BS.
\item Theoretical performance gain of the proposed minimal delay association policy in F-RANs is clarified. By using stochastic geometry and queueing theory, closed-form expressions of successful delivery probability, average ergodic rate, delivery latency under the concerned association policies are derived. Furthermore, for intuitive presentation, the traditional association policy to maximize reference signal receiving power (RSRP) is utilized as a contrast and baseline.
\item To evaluate internal correlation between delivery latency and other performance metrics in F-RANs, the impacts of key operating parameters on the aforementioned performance metrics are investigated. Meanwhile, the tradeoff between average ergodic rate and delivery latency is exploited, and numerical results demonstrate that increasing the cache size of F-APs can more significantly reduce average delivery latency, compared with increasing the density of F-APs, which comes at the expense of decreasing average ergodic rate.
\end{itemize}

The remainder of this paper is organized as follows. In Section II, the system model with cache protocol and user association policy is presented. In Section III, closed-form expressions in terms of successful delivery probability, ergodic rate, and delivery latency based on the traditional maximal RSRP association policy are derived. While an advance minimal delay association policy and related theoretical analysis are presented in Section IV. Simulation results are provided in Section V. Finally, conclusions are drawn in Section VI.

\section{System Model}
\subsection{Network Model}

As illustrated in Fig. 1, a group of RRHs and F-APs are spatially deployed in a two-dimensional disc plane $D^2$ according to an independent homogeneous PPP, denoted as $\Phi_i$ with density of $\lambda_i$ for $i \in \left\{ {R, F} \right\}$ respectively, which characterizes the randomness of network topology [23]. Meanwhile, users are also modeled as a PPP distribution, denoted as $\Phi_u$ with constant density of $\lambda_u$. Note that RRHs are connected to virtual BBU pool through fronthaul links, meanwhile F-APs are connected to the core network through network controller with the assist of the backhaul links. Edge cache is deployed in each F-AP, where users can achieve requested contents directly from limited caches in F-APs if the requested contents are cached locally. This content delivery approach operates in two phases, namely pre-fetching and delivery phase, where pre-fetching phase is executed at a large time scale corresponding to the period of fixed file popularity and always during the off-peak time. Otherwise, if there are not the requested contents in local limited caches, users can acquire services at RRHs or F-APs, which utilizes fronthaul links to fetch the requested contents from centralized BBU pool or utilizes backhaul links to fetch the requested contents from core network, respectively.

For the wireless channel, both large-scale fading and small-scale fading are considered. Specifically, the large-scale fading is modeled by a standard distance-dependent path loss attenuation $r^{-\alpha}$ with path loss exponent $\alpha$, where $r$ denotes the distance between certain BS and the corresponding user [17]. While the small-scale fading is assumed to be denoted as the Rayleigh fading, i.e., $h \sim CN\left( {0,1} \right)$. Each user experiences an additive noise that obeys zero-mean complex Gaussian distribution with variance $\sigma^2$ [18].
\subsection{Cache Model}
For the content delivery application, a database consisting of $N$ contents is considered to be deployed in BBU pool, and all contents are assumed to have equal length $L$. Each F-AP has limited caching capacity with storage size $C_F$, where $C_F<N \times L$. Therefore, only part of database contents can be cached in each F-AP. For simplicity, each F-AP is assumed to just cache integer contents and thus $C_F$ is integral multiple of content length $L$.
Let $\mathcal{C}=\{f_1, f_2, \cdots, f_M\}$ denote the set of cached contents in F-APs, where $M=C_F/L$ means the number of cached contents.

\begin{figure}[t]
  \centering
  \includegraphics[width=0.5\textwidth]{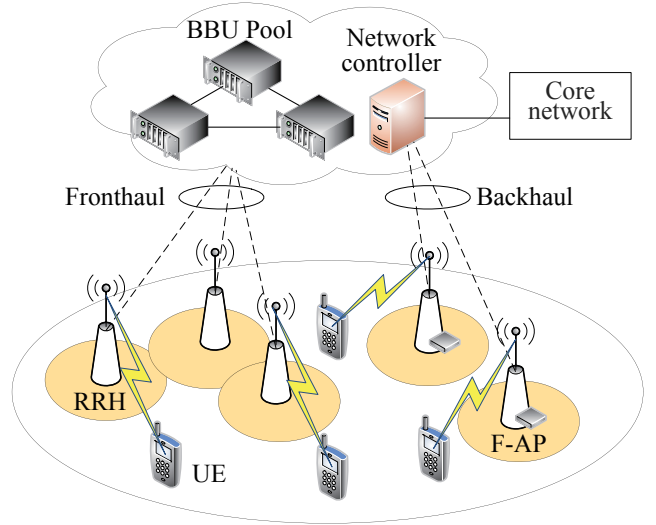}
  \vspace*{-5pt} \caption{System model of cache-enabled F-RANs }\vspace*{-10pt}
  \label{1}
\end{figure}

Actually, users get strong desire of the most popular contents, resulting in that only a small portion of the $N$ contents are frequently requested by the majority of users. It can be assumed that content popularity follows Zipf distribution [13], and demand probability that the $i-th$ most popularity content is requested can be expressed as
\begin{equation}\label{file_probability}
f_i\left( \tau,N \right)=\frac{1/{i^{\tau}}}{\sum\nolimits_{j = 1}^N {1/{j^\tau }}},
\end{equation}
where the content with a smaller index has a larger probability to be requested by users, i.e., $f_i\left( \tau,N \right)>f_j\left( \tau,N \right)$ if $i<j$. Note that Zipf exponent $\tau$ models the popularity skewness of contents, which is a nonnegative number. In addition, only a fewer of popular contents are frequently requested by users when Zipf exponent $\tau$ become larger. Specially, content popularity follows the uniform distribution as $\tau=0$.

Caching hit probability is defined as the probability that typical user finds requested content $f$ in the corresponding cache, i.e.,
$p_{hit}=\Pr\left(f\in \mathcal{C}\right)$.
Particularly, most popular caching strategy is adopted here, where F-APs only cache the most popular contents. It's reasonable for services with high Zipf exponent $\tau$, such as HD video service, since only a few of popular contents are frequently requested by major users. Therefore, the caching hit probability is denoted as
\begin{equation}\label{file_probability}
p_{hit}=\sum\nolimits_{i = 1}^{M}f_i\left( \tau,N \right).
\end{equation}
Note that other caching strategies, suitable for their specific services, are also applicable in this paper, where only $p_{hit}$ and densities of BSs with related contents need to be modified.
\subsection{Access Model}
Without any loss of generality, typical user under consideration is assumed to be located at the origin. The user association policy does not only depend on the signal receiving power but also the requested and cached contents. Here, two different user association policies are considered. Firstly, typical user is associated with the strongest BS based on RSRP, thus the user association policy is given by:
\begin{itemize}
\item Typical user is associated with F-APs if $k{r_{x_0}}^ {-\alpha} \le {r_{y_0}}^ {-\alpha}$
\item Typical user is associated with RRHs if $k{r_{x_0}}^ {-\alpha} > {r_{y_0}}^ {-\alpha}$
\end{itemize}
where $r_{x_0}$ is the distance between user and the closest F-AP, while $r_{y_0}$ is the distance between user and the closest RRH. Note that $k < 1$ is assumed since $P_R < P_F$ in general,
where $P_F$ and $P_R$ denote the transmit powers at F-APs and RRHs, respectively.

Secondly, typical user is connected to BS with the lowest delivery latency, thus the user association policy is given by:
\begin{equation}\label{file_probability}
l\left( j \right)=\arg\mathop {\max }\limits_{i \in \left\{ {{F_c},{{\tilde F}_c},R} \right\}}
\left\{ {D_{i|j}} \right\},
\end{equation}
where $F_c$, ${\tilde F}_c$ and $R$ denote set of F-APs with the requested content, set of F-APs without the requested content and set of RRHs.
In detail, the latency for typical user at F-APs with the requested content $j$, i.e., $D_{F_c|j}$, consists of queuing latency at F-APs and transmit latency of F-APs-user links, while the latency for typical user at F-APs without the requested content $j$, i.e., $D_{{\tilde F}_c|j}$, also need to consider the latency of backhaul links. In addition, the latency for RRHs user $D_{R|j}$ includes the latency of fronthaul links , the queuing latency at RRHs and the transmit latency of RRHs-user links.

\subsection{Signal to Interference plus Noise Ratio}
Given that typical user is associated with F-APs, then received instantaneous SINR at the typical user is given by
\begin{equation}\label{file_probability}
\gamma_{x_0}^{F}=\frac{P_F{\left| h_{x_0} \right|}^2{r_{x_0}}^ {-\alpha}} {\sum\limits_{x \in {\Phi _{F} \backslash x_0}}{P_F}{\left| h_{x} \right|}^2{r_x}^ {-\alpha}+\sum\limits_{y \in {\Phi _R}}{P_R}{\left| h_{y} \right|}^2{r_y}^ {-\alpha}+{\sigma}^2},
\end{equation}
where ${\left| h \right|}^2 $ characterizes the Rayleigh small-scale fading channel gain between typical user and the serving or interfering BS, i.e., ${\left| h \right|}^2 \sim exp\left( 1 \right)$. $r^{-\alpha}$ denotes the path loss, where $\alpha>2$ is the path loss exponent. $I_{F}=\sum\nolimits_{x \in {\Phi _{F} \backslash x_0}}{P_F}{\left| h_{x} \right|}^2{r_x}^ {-\alpha}$ denotes intra-tier interference from other F-APs except the serving F-AP. $I_R=\sum\nolimits_{y \in {\Phi _R}}{P_R}{\left| h_{y} \right|}^2{r_y}^ {-\alpha}$ denotes inter-tier interference from RRHs.

Otherwise, typical user is served by RRHs, limited interference collaboration is considered here since the available channel state information (CSI) are always inaccuracy in realistic scenarios. With limited feedback, the channel direction information (CDI) is fed back through a quantization codebook known at both the transmitter and receiver. The quantization is chosen from a codebook of unit norm vectors of size $L=2^B$, where $B$ is the number of feedback bits [19].

With the limited feedback, the statistic of the quantized CDI is considered. Let $cos{\theta _{x}} = \left| {\tilde h_{x}^*{{\hat h}_{x}}} \right|$, where ${\theta _{x}} = \angle \left( {\tilde h_{x}{{\hat h}_{x}}} \right)$
denotes the angle between the normalized version and the quantization of channel vector $h_x$, and $h_x^*$ refers to the conjugate transpose, or Hermitian, of channel vector $h_x$.
Thus the quantized coefficient can be expressed as
\begin{equation}\label{file_probability}
{\zeta} = 1 - {L}\beta \left( {{L},\frac{{{N_B}}}{{{N_B} - 1}}} \right),
\end{equation}
where $\beta\left(x, y\right)$ is the Beta function, i.e., $\beta\left(x, y\right) = {\textstyle{ \Gamma(x)\Gamma(y)  \over {\gamma \left( x+y \right)}}}$ with $\Gamma(x) =\int_0^\infty t^{x-1}e^{-t}dt$ as the Gamma function, and $N_B$ is the number of antennas at each RRH.

Meanwhile, with regard to the interference signals from other RRHs except the serving RRH, the interference power coefficient with the limited feedback is denoted as
\begin{equation}\label{file_probability}
\upsilon  = {2^{ - \frac{{{B}}}{{{N_B} - 1}}}}.
\end{equation}

Therefore, the received instantaneous SINR under limited interference collaboration at typical user is statistically equivalent to
\begin{equation}\label{file_probability}
\gamma_{y_0}^{R}=\frac{ P_R \zeta {\left| h_{y_0} \right|}^2{r_{y_0}}^ {-\alpha}} {\sum\limits_{x \in {\Phi _{F}}}{P_F}{\left| h_{x} \right|}^2{r_x}^ {-\alpha}+\upsilon \sum\limits_{y \in {\Phi _{I_{R}\backslash{y_0}}}}{P_R}{\left| h_{x} \right|}^2{r_y}^ {-\alpha}+{\sigma}^2},
\end{equation}
where $I_{F}=\sum\nolimits_{x \in {\Phi _{F}}}{P_F}{\left| h_{x} \right|}^2{r_x}^ {-\alpha}$ denotes inter-tier interference from F-APs, $I_{R}=\upsilon\sum\nolimits_{x \in {\Phi _{I_{R}\backslash{y_0}}}}{P_R}{\left| h_{x} \right|}^2{r_y}^ {-\alpha}$ denotes redundant interference from RRHs except the serving RRH.
\section{Maximal RSRP Association Policy}
In this section, we derive successful delivery probability, average ergodic rate, and delivery latency for cache-enabled F-RANs based on the traditional maximal RSRP association policy, where typical user is associated with the strongest BS according to RSRP. The probability of tier association of typical user is investigated firstly.

\subsection{Probability of Tier Association}
\begin{lemma}
The user association probability based on maximal RSRP association policy at F-APs is given by [20]
\begin{equation}\label{file_probability}
A_F=\frac{\lambda_F}{\lambda_F+ k^2\lambda_R},
\end{equation}
and the user association probability based on maximal RSRP association policy at RRHs is given by
\begin{equation}\label{file_probability}
 A_R=1-A_F=\frac{k^2\lambda_R}{\lambda_F+ k^2\lambda_R},
\end{equation}
where $k$ is set by $k=\left( P_R/P_F \right)^{1/\alpha}$ to maximize the success delivery probability of F-APs user to take full advantage of caching and minimize delivery latency, and this is at the expense of a lower success delivery probability of RRHs user, compared with the maximum value.
\end{lemma}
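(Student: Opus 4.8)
The plan is to obtain $A_F$ directly as the probability that the typical user's strongest mean received power comes from an F-AP, and then to recover $A_R$ by complementarity. The first step is to convert the maximal-RSRP comparison into a condition on the two nearest-neighbor distances. Since the mean received powers from the closest F-AP and the closest RRH are $P_F r_{x_0}^{-\alpha}$ and $P_R r_{y_0}^{-\alpha}$, the user attaches to an F-AP exactly when $P_F r_{x_0}^{-\alpha}\ge P_R r_{y_0}^{-\alpha}$, which rearranges to $r_{y_0}\ge k\,r_{x_0}$ with $k=(P_R/P_F)^{1/\alpha}$. Hence $A_F=\Pr[\,r_{y_0}\ge k\,r_{x_0}\,]$, and the whole problem reduces to evaluating this probability over the randomness of $\Phi_F$ and $\Phi_R$.

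Next I would bring in the standard contact-distance law for a homogeneous PPP. For $\Phi_F$ of density $\lambda_F$, the void probability of an empty disc of radius $x$ gives $\Pr[r_{x_0}>x]=e^{-\pi\lambda_F x^2}$, hence the density $f_{r_{x_0}}(x)=2\pi\lambda_F x\,e^{-\pi\lambda_F x^2}$; independently, for $\Phi_R$ of density $\lambda_R$ the tail is $\Pr[r_{y_0}\ge d]=e^{-\pi\lambda_R d^2}$. The structural fact I would lean on is that $\Phi_F$ and $\Phi_R$ are independent PPPs, so $r_{x_0}$ and $r_{y_0}$ are independent random variables and I may condition on one while using the void probability of the other. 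Conditioning on $r_{x_0}=x$ and integrating the RRH void probability against the F-AP contact-distance density yields
\[
A_F=\int_0^\infty \Pr[r_{y_0}\ge k x]\,f_{r_{x_0}}(x)\,dx=2\pi\lambda_F\int_0^\infty x\,e^{-\pi(\lambda_F+k^2\lambda_R)x^2}\,dx.
\]
Evaluating the elementary integral $\int_0^\infty x\,e^{-ax^2}\,dx=1/(2a)$ with $a=\pi(\lambda_F+k^2\lambda_R)$ collapses this to $A_F=\lambda_F/(\lambda_F+k^2\lambda_R)$, and since the typical user associates with exactly one tier, $A_R=1-A_F=k^2\lambda_R/(\lambda_F+k^2\lambda_R)$, giving the stated formulas.

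Because every step reduces to a known PPP distance distribution together with a one-line Gaussian-type integral, I do not expect a genuine analytic obstacle here; the result is essentially the flexible-association identity of [20] specialized to two tiers. The only point requiring care is the bookkeeping of the bias factor: I would verify that the $(P_R/P_F)^{1/\alpha}$ scaling enters the final ratio as $k^2$ rather than $k$, which is a direct consequence of the $r^2$ appearing in the planar void probabilities. I would also confirm that the inequality direction is consistent with attaching to the tier of larger received power, so that the optimality remark—choosing $k=(P_R/P_F)^{1/\alpha}$ to maximize the F-AP association (and thereby exploit caching and minimize latency)—follows from $A_F$ being increasing in the effective F-AP weight $\lambda_F$ relative to $k^2\lambda_R$.
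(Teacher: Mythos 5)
Your derivation of the two formulas is correct and is essentially the standard argument: the paper gives no proof of this lemma (it defers to [20]), and the computation you carry out---converting max-RSRP into the distance condition $r_{y_0}\ge k\,r_{x_0}$, then integrating the RRH void probability $e^{-\pi\lambda_R k^2 x^2}$ against the F-AP contact-distance density $2\pi\lambda_F x\,e^{-\pi\lambda_F x^2}$---is exactly the calculation underlying the paper's own Appendix A, where the conditional PDF $f_F\left( r\,|\,\cdot \right)=2\pi r\left(\lambda_F+k^2\lambda_R\right)e^{-\pi r^2\left(\lambda_F+k^2\lambda_R\right)}$ is precisely your integrand normalized by $A_F$. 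One caveat on your closing sentence: the paper's remark is that $k=\left(P_R/P_F\right)^{1/\alpha}$ maximizes the F-AP users' \emph{success delivery probability} (the SINR coverage of Lemma 2, which under this choice collapses to the single-tier value $1/\left(1+\rho\left(\delta,\alpha\right)\right)$), not the association probability $A_F$; since $A_F=\lambda_F/\left(\lambda_F+k^2\lambda_R\right)$ is decreasing in $k$, maximizing $A_F$ would push $k\to 0$ rather than select this value, so your monotonicity justification does not establish that remark---though this concerns only the commentary, not the validity of the derived formulas.
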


Note that the user association probabilities based on maximal RSRP association policy are independent of the requested content and caching allocation, while only depend on the density and power of both F-APs and RRHs.

\subsection{Successful Delivery Probability}
Success delivery probability is defined as the probability of SINR between the associated BS and typical user larger than a SINR threshold $\delta$, which is also called coverage probability. Considering that typical user can be associated with certain BS under open access mode, the success delivery probability of typical user based on maximal RSRP association policy is expressed as
\begin{equation}\label{file_probability}
\begin{aligned}
 S\left( \delta,\alpha \right) &=Pr \left( SINR>\delta \right)\\
 &=A_F S_F\left( \delta,\alpha \right) + A_R S_R\left( \delta,\alpha \right),
\end{aligned}
\end{equation}
where $A_F$ and $A_R$ denote user association probability at F-APs and RRHs, respectively, $S_F\left( \delta,\alpha \right)$ and $S_R\left( \delta,\alpha \right)$ are success delivery probability of typical user at F-APs and RRHs, respectively.

Note that the success delivery probability of typical user based on maximal RSRP association policy is independent of the requested content and caching allocation, because of the user association policy.

\begin{lemma}
The success delivery probability of typical user at F-APs based on maximal RSRP association policy is derived as
\begin{equation}\label{file_probability}
S_F\left( \delta,\alpha \right) =\frac{\lambda_F+k^2 \lambda_R} {\lambda_F \left( 1+ \rho \left( \delta,\alpha \right) \right)+ k^2 \lambda_R \left( 1+ \rho \left( \frac{\delta P_R}{k^{\alpha}P_F},\alpha \right) \right) },
\end{equation}
where $\rho \left( \delta,\alpha \right)=\int_{{\delta ^{ - 2 / \alpha}}}^\infty  {\frac{{{\delta ^{2/\alpha }}}}{{1 + {u^{\alpha /2}}}}} du$
and the success delivery probability of typical user at RRHs based on maximal RSRP association policy is derived as
\begin{equation}\label{file_probability}
S_R\left( \delta,\alpha \right) =\frac{\lambda_F+k^2 \lambda_R} {\lambda_F \left( 1+ \rho \left( \delta_1,\alpha \right) \right)+ k^2 \lambda_R \left( 1+ \rho \left( \delta_2,\alpha \right) \right) },
\end{equation}
where $\delta_1=\frac{\delta k^{\alpha}P_F}{\zeta P_R}$, $\delta_2=\frac{\upsilon \delta } {\zeta}$, $\zeta$ and $\upsilon$ are the statistic of the quantized coefficient and the interference power coefficient, respectively.
\end{lemma}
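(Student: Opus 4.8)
The plan is to prove both claims with the standard Rayleigh-fading, PPP coverage argument, working in the interference-limited regime since the target expressions carry no $\sigma^2$ term. First I would record the joint law of the serving tier and serving distance. For an F-AP user, the nearest F-AP sits at distance $r$ with density $2\pi\lambda_F r\exp(-\pi\lambda_F r^2)$, and the maximal-RSRP rule requires that no RRH be stronger, i.e. that no RRH fall inside the exclusion disc of radius $kr$; this has probability $\exp(-\pi k^2\lambda_R r^2)$. Multiplying, the probability that the user is served by an F-AP at distance $r$ has density
\begin{equation}
f_F(r) = 2\pi\lambda_F r\exp\!\left(-\pi(\lambda_F + k^2\lambda_R)r^2\right),
\end{equation}
whose total mass is exactly $A_F$ from Lemma 1, a useful consistency check.

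Conditioned on $r$, I would exploit $|h_{x_0}|^2\sim\exp(1)$: the event $\gamma_{x_0}^{F}>\delta$ has conditional probability $\exp(-s(I_F+I_R))$ with $s=\delta r^{\alpha}/P_F$, and since $\Phi_F$ and $\Phi_R$ are independent PPPs this averages into a product of Laplace transforms. Each transform follows from the PPP probability generating functional,
\begin{equation}
\mathcal{L}_I(s) = \exp\!\left(-2\pi\lambda\int_{d}^{\infty}\frac{sPv^{-\alpha}}{1+sPv^{-\alpha}}\,v\,dv\right),
\end{equation}
where the exclusion radius is $d=r$ for the intra-tier field $I_F$ (interferers beyond the serving F-AP) and $d=kr$ for the inter-tier field $I_R$. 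Substituting $w=v/d$ followed by a power-law rescaling collapses each integral into the single function $\rho(\cdot,\alpha)$, giving $\mathcal{L}_{I_F}(s)=\exp(-\pi\lambda_F r^2\rho(\delta,\alpha))$ and $\mathcal{L}_{I_R}(s)=\exp(-\pi k^2\lambda_R r^2\rho(\delta P_R/(k^{\alpha}P_F),\alpha))$, the shifted second argument arising from the interferer power $P_R$ together with the $kr$ exclusion radius.

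Collecting the three exponentials, the $r$-integral is elementary:
\begin{equation}
A_F S_F(\delta,\alpha) = \int_0^\infty 2\pi\lambda_F r\exp(-\pi K r^2)\,dr = \frac{\lambda_F}{K},
\end{equation}
with $K=\lambda_F(1+\rho(\delta,\alpha))+k^2\lambda_R(1+\rho(\delta P_R/(k^{\alpha}P_F),\alpha))$. Dividing by $A_F$ from Lemma 1 yields the stated $S_F$. The RRH expression follows the identical template applied to the RRH SINR: the desired link now carries the quantization coefficient $\zeta$, the residual RRH interference carries $\upsilon$, and RRH association places the nearest F-AP beyond radius $k^{-1}r$. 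The ratio of each interferer's effective power to the effective desired power $\zeta P_R$ rescales the threshold, producing precisely $\delta_1=\delta k^{\alpha}P_F/(\zeta P_R)$ for the F-AP field and $\delta_2=\upsilon\delta/\zeta$ for the residual RRH field, after which the same computation gives $S_R$.

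The main obstacle is bookkeeping rather than any single hard estimate: one must read the correct inter-tier exclusion radii ($kr$ and $k^{-1}r$) off the association inequalities, keep the serving-distance density consistent with the association probabilities of Lemma 1, and perform the change of variables so that both intra- and inter-tier contributions reduce to the \emph{same} integral $\rho$. The secondary care point is tracking how the power and quantization coefficients $P_R/P_F$, $\zeta$, and $\upsilon$ rescale the SINR threshold into the arguments $\delta P_R/(k^{\alpha}P_F)$, $\delta_1$, and $\delta_2$; confirming that neglecting $\sigma^2$ is the intended simplification is what makes the closed forms exact.
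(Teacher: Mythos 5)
Your proposal is correct and follows essentially the same route as the paper's Appendix A: condition on the serving distance (whose law under maximal-RSRP association uses exclusion radii $kr$ and $r/k$), factor the Rayleigh-fading success probability into the two interference Laplace transforms via the PPP generating functional, invoke the interference-limited limit $\sigma^2 \to 0$, collapse both integrals into $\rho(\cdot,\alpha)$, and integrate the resulting Gaussian-type exponential over $r$. The only cosmetic difference is that you work with the unnormalized joint density and divide by $A_F$ at the end, whereas the paper normalizes the conditional serving-distance PDF up front; these are equivalent.
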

\begin{proof}
See Appendix A.
\end{proof}

Note that the success delivery probabilities based on maximal RSRP association policy are related to the density and power of both the F-APs and RRHs, as well as the SINR threshold, except for the size and location of caches. Particularly, the success delivery probability of typical user at F-APs can be simplified as $1/(1+\rho \left( \delta,\alpha \right))$ and has nothing to do with density and power of BSs, since $k$ is set by $k=\left( P_R/P_F \right)^{1/\alpha}$, i.e., F-RAN degenerates into a single-tier network from the perspective of F-AP users.

This result can be simplified further and reduces to a simple closed-form expression with path loss exponent $\alpha=4$, which is given by Corollary 1.

\begin{corollary}
In the particular case of path loss exponent $\alpha=4$, the success delivery probabilities can be simplified as (13) and (14) shown in the bottom at next page, respectively.
\end{corollary}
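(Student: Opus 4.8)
The plan is to observe that the expressions (11) and (12) from Lemma~2 depend on the path loss exponent only through the single integral $\rho(\delta,\alpha)$, so the entire corollary reduces to evaluating this integral in closed form at $\alpha=4$ and substituting back. First I would set $\alpha=4$ in
\[
\rho(\delta,\alpha)=\int_{\delta^{-2/\alpha}}^{\infty}\frac{\delta^{2/\alpha}}{1+u^{\alpha/2}}\,du,
\]
using $\alpha/2=2$ and $2/\alpha=1/2$. The denominator exponent then collapses to $u^{2}$, the prefactor becomes $\sqrt{\delta}$, and the lower limit becomes $\delta^{-1/2}=1/\sqrt{\delta}$, so the integral reduces to the standard arctangent form
\[
\rho(\delta,4)=\sqrt{\delta}\int_{1/\sqrt{\delta}}^{\infty}\frac{du}{1+u^{2}}
=\sqrt{\delta}\,\Bigl(\frac{\pi}{2}-\arctan\frac{1}{\sqrt{\delta}}\Bigr).
\]

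Next I would invoke the complementary-angle identity $\arctan x+\arctan(1/x)=\pi/2$, valid for $x>0$, with $x=\sqrt{\delta}$; since $\delta$ is an SINR threshold we have $\delta>0$, so this applies and gives $\tfrac{\pi}{2}-\arctan(1/\sqrt{\delta})=\arctan\sqrt{\delta}$. This produces the compact closed form $\rho(\delta,4)=\sqrt{\delta}\,\arctan\sqrt{\delta}$. The remaining step is purely mechanical: substitute this evaluation into (11) for each occurrence of $\rho$, namely replacing $\rho(\delta,\alpha)$ by $\sqrt{\delta}\,\arctan\sqrt{\delta}$ and $\rho\!\left(\tfrac{\delta P_R}{k^{\alpha}P_F},\alpha\right)$ by its $\alpha=4$ value, to obtain (13); and similarly substitute $\rho(\delta_1,4)=\sqrt{\delta_1}\,\arctan\sqrt{\delta_1}$ and $\rho(\delta_2,4)=\sqrt{\delta_2}\,\arctan\sqrt{\delta_2}$ into (12) to obtain (14).

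I do not expect a genuine obstacle here, as the result is a direct specialization. The only care needed is bookkeeping: handling the lower limit $\delta^{-2/\alpha}=1/\sqrt{\delta}$ correctly and confirming that the complementary-angle simplification is legitimate (which it is, given $\delta>0$). The mildly interesting consistency check is that, under the earlier choice $k=(P_R/P_F)^{1/\alpha}$ so that $k^{\alpha}=P_R/P_F$, the argument $\tfrac{\delta P_R}{k^{\alpha}P_F}$ collapses to $\delta$; I would verify that (13) then reduces to $1/(1+\sqrt{\delta}\,\arctan\sqrt{\delta})$, matching the single-tier simplification noted after Lemma~2.
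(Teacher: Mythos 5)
Your proposal is correct and follows exactly the route underlying the paper's Corollary 1: evaluate $\rho(\delta,\alpha)$ at $\alpha=4$ as the arctangent integral $\rho(\delta,4)=\sqrt{\delta}\bigl(\tfrac{\pi}{2}-\arctan(1/\sqrt{\delta})\bigr)$ and substitute it (with the appropriate arguments $\tfrac{\delta P_R}{k^{\alpha}P_F}$, $\delta_1$, $\delta_2$) into (11) and (12), which is precisely the form appearing in (13) and (14). Your further rewriting via $\arctan x+\arctan(1/x)=\pi/2$ into $\sqrt{\delta}\,\arctan\sqrt{\delta}$ is a harmless cosmetic simplification of the same expressions, and your consistency check against the single-tier remark after Lemma~2 is valid.
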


\setcounter{equation}{12}
\begin{figure*}[hb]
\hrulefill
\begin{equation}\label{file_probability}
S_F\left( \delta,\alpha \right) =\frac{\lambda_F+k^2 \lambda_R} {\lambda_F \left( 1+ \sqrt {\delta}\left(\frac{\pi}{2}-arctan\left(\frac{1}{\sqrt {\delta}}\right)\right) \right)+ k^2 \lambda_R \left( 1+\sqrt {\frac{\delta P_R}{k^{\alpha}P_F}}\left(\frac{\pi}{2}-arctan\left({\sqrt {\frac{k^{\alpha}P_F}{\delta P_R}}}\right)\right) \right) },
\end{equation}
\end{figure*}
\setcounter{equation}{13}
\begin{figure*}[hb]
\begin{equation}\label{file_probability}
S_R\left( \delta,\alpha \right) =\frac{\lambda_F+k^2 \lambda_R} {\lambda_F \left( 1+\sqrt {\frac{\delta k^{\alpha}P_F}{\zeta P_R}}\left(\frac{\pi}{2}-arctan\left(\sqrt {\frac{\zeta P_R}{\delta k^{\alpha}P_F}}\right)\right) \right)+ k^2 \lambda_R \left( 1+ \sqrt {\frac{\upsilon \delta } {\zeta}}\left(\frac{\pi}{2}-arctan\left(\sqrt {\frac{\zeta}{\upsilon \delta } }\right)\right) \right) }.
\end{equation}
\end{figure*}

\subsection{Average Ergodic Rate}
Assuming that adaptive modulation and coding are utilized, and each user can achieve their rate with instantaneous SINR according to the Shannon bound, i.e. $\log(1 + SINR)$.

\begin{lemma}
The average ergodic rate of typical user with the associated BS based on maximal RSRP association policy is defined as [21]
\begin{equation}\label{file_probability}
\mathbb{E}[R_x]=\int_0^\infty {S_x\left( {2^t-1},\alpha \right)}dt,
\end{equation}
where $S_x\left( {2^t-1},\alpha \right) $ is the success delivery probability of typical user at certain BS $x$ by fixing $\delta = 2^t-1$.
\end{lemma}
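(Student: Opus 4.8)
The plan is to obtain the integral expression by combining the elementary tail-integral (layer-cake) identity for the expectation of a nonnegative random variable with the definition of the success delivery probability already in hand. The average ergodic rate of the typical user served by tier $x \in \{F,R\}$ is the expectation of the instantaneous Shannon rate $R_x = \log_2(1+\gamma_x)$, where $\gamma_x$ is the received instantaneous SINR defined in the SINR subsection and the expectation is taken over both the small-scale Rayleigh fading and the spatial randomness of the point processes $\Phi_F$ and $\Phi_R$.

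First I would observe that $R_x = \log_2(1+\gamma_x) \geq 0$ almost surely, since $\gamma_x \geq 0$. This nonnegativity is what licenses the tail formula $\mathbb{E}[Y] = \int_0^\infty \Pr(Y > t)\,dt$, valid for any nonnegative random variable $Y$; it follows by writing $Y = \int_0^\infty \mathbf{1}\{t < Y\}\,dt$ and exchanging expectation and integration via Tonelli's theorem. Applying it with $Y = R_x$ gives $\mathbb{E}[R_x] = \int_0^\infty \Pr(R_x > t)\,dt$.

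Next I would evaluate the tail probability. By strict monotonicity of the map $s \mapsto \log_2(1+s)$, the event $\{R_x > t\} = \{\log_2(1+\gamma_x) > t\}$ coincides with $\{\gamma_x > 2^t - 1\}$, so that $\Pr(R_x > t) = \Pr(\gamma_x > 2^t - 1)$. The right-hand side is precisely the success delivery probability evaluated at the SINR threshold $\delta = 2^t - 1$, namely $S_x(2^t - 1, \alpha)$ as derived in Lemma 2. Substituting this into the tail integral yields $\mathbb{E}[R_x] = \int_0^\infty S_x(2^t - 1, \alpha)\,dt$, which is the claimed expression.

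I do not expect a substantive obstacle, as the statement is a direct consequence of these two standard facts; the only point requiring care is verifying that the probability measure underlying the definition of $S_x$ (coverage over the fading and the point processes) coincides with the measure over which the rate is averaged, so that $\Pr(\gamma_x > 2^t-1) = S_x(2^t-1,\alpha)$ holds exactly rather than merely conditionally. Given the typical-user and open-access conventions fixed in the system model, this identification is immediate, after which the closed forms of $S_x$ from Lemma 2 (or Corollary 1 when $\alpha = 4$) can be inserted to render the integrand fully explicit.
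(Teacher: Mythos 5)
Your proposal is correct and follows essentially the same route as the paper's proof: both invoke the tail-integral identity $\mathbb{E}[R_x]=\int_0^\infty \Pr(R_x \ge t)\,dt$ for the nonnegative rate $R_x=\log_2(1+\gamma_x)$, convert the event $\{R_x \ge t\}$ to $\{\gamma_x \ge 2^t-1\}$ by monotonicity, and identify the resulting tail probability with $S_x(2^t-1,\alpha)$. Your added care (Tonelli's theorem and matching the underlying probability measure) only makes explicit what the paper leaves implicit.
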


\begin{proof}
The instantaneous rate of typical user with associated BS can be denoted as a random variable $R_x=\log(1 + \gamma_x)$ in bps/Hz with respect to the instantaneous SINR at BS $x$. Since $R_x$ is a positive random variable, the ergodic rate of typical user with associated BS based on maximal RSRP association policy is derived as
\begin{equation}\label{file_probability}
\begin{aligned}
&\mathbb{E}\left[R_x\right]=\int_0^\infty P\left(R_x \ge t\right)dt\\
&=\int_0^\infty P\left( \gamma_x \ge 2^t-1\right)dt=\int_0^\infty S_x\left( 2^t-1,\alpha \right)dt,
\end{aligned}
\end{equation}
and the proof is complete.
\end{proof}

Furthermore, according to both the success delivery probability of F-APs user and RRHs user, the average ergodic rate of F-RANs based on maximal RSRP association policy can be obtained by Corollary 2, which is the weighted average of ergodic rate of F-APs user and RRHs user.

\begin{corollary}
The average ergodic rate of typical user based on maximal RSRP association policy can be expressed as
\begin{equation}\label{file_probability}
\begin{aligned}
&R\left( \alpha \right)= A_F \mathbb{E} \left[ log_2 \left( 1+ \gamma_{x_0}^F\right) \right] +  A_R \mathbb{E} \left[ log_2 \left( 1+\gamma_{y_0}^R\right) \right]\\
&= A_F \int_0^\infty {S_F\left( {2^t-1},\alpha \right)}dt +A_R \int_0^\infty {S_R\left( {2^t-1},\alpha \right)}dt,
\end{aligned}
\end{equation}
where $S_F\left( {2^t-1},\alpha \right) $ and $S_R\left( {2^t-1},\alpha \right) $ are the success delivery probability of typical user at F-APs and RRHs by fixing $\delta = 2^t-1$, respectively.
\end{corollary}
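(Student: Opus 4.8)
The plan is to obtain Corollary 2 as an almost immediate consequence of the tier-association probabilities established in Lemma 1 together with the tail-integral representation of the ergodic rate proved in Lemma 3. First I would observe that the SINR experienced by the typical user is entirely determined by which tier it associates with: with probability $A_F$ it attaches to an F-AP and sees $\gamma_{x_0}^F$, while with probability $A_R = 1 - A_F$ it attaches to an RRH and sees $\gamma_{y_0}^R$. Writing the average ergodic rate as $R(\alpha) = \mathbb{E}\left[\log_2\left(1 + \mathrm{SINR}\right)\right]$ and applying the law of total expectation by conditioning on the association event, the expectation splits additively across the two tiers weighted by their association probabilities, which yields the first equality $R(\alpha) = A_F\, \mathbb{E}\left[\log_2\left(1 + \gamma_{x_0}^F\right)\right] + A_R\, \mathbb{E}\left[\log_2\left(1 + \gamma_{y_0}^R\right)\right]$.

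Next I would treat each conditional term separately. For the F-AP term, $S_F\left(\delta,\alpha\right)$ from Lemma 2 is precisely the conditional coverage probability $\Pr\left(\gamma_{x_0}^F > \delta\right)$ given association to an F-AP; invoking Lemma 3 with $x = F$ gives $\mathbb{E}\left[\log_2\left(1 + \gamma_{x_0}^F\right)\right] = \int_0^\infty S_F\left(2^t - 1,\alpha\right) dt$. The identical argument with $x = R$, using that $S_R\left(\delta,\alpha\right)$ is the conditional coverage probability at an RRH, gives $\mathbb{E}\left[\log_2\left(1 + \gamma_{y_0}^R\right)\right] = \int_0^\infty S_R\left(2^t - 1,\alpha\right) dt$. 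Substituting both expressions into the first equality produces the second equality, which is exactly the claimed weighted-average form for $R(\alpha)$.

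The hard part will not be any computation but rather the justification that $S_F$ and $S_R$ are the correct \emph{conditional} SINR-tail functions entering Lemma 3, as opposed to unconditional ones. This hinges on the fact, already recorded after Lemma 2, that these coverage probabilities are evaluated with respect to the serving-link distance distribution conditioned on the typical user being associated with the corresponding tier. Once this point is granted, the tail-integral identity of Lemma 3 transfers verbatim to each conditional ergodic rate, and the decomposition follows with no further work. The remaining content, namely inserting the explicit closed forms of $S_F$ and $S_R$ from Lemma 2 into the integrals, is purely mechanical and does not affect the structure of the argument.
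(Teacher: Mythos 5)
Your proposal is correct and follows essentially the same route as the paper, which presents Corollary 2 without a separate proof precisely because it is the weighted average (via the association probabilities $A_F$, $A_R$ of Lemma 1) of the per-tier conditional ergodic rates, each rewritten through the tail-integral identity of Lemma 3 applied to the conditional coverage probabilities $S_F$ and $S_R$ of Lemma 2. Your added emphasis that $S_F$ and $S_R$ must be the \emph{conditional} SINR-tail functions (computed with the association-conditioned serving-distance distribution, as in Appendix A) is exactly the point the paper leaves implicit, so nothing is missing.
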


\subsection{Delivery Latency}
The mean delivery latency of typical user at certain BS consists of two parts. The first part is transmission time $T$, which is the duration from the moment when transmission starts to the moment when the transmission ends. The second part is waiting time $W$, which is the duration from the moment of requested content arrivals to the moment when the transmission starts. Here, each BS is considered to be modeled as a $M/D/1$ queuing system. It characterizes transmission time interval with exponential distribution and fixed size data, which is suitable for HD video services in this paper [22].

\begin{lemma}
Since the $M/D/1$ queuing system is considered here, the mean delivery latency of typical user at certain BS $x$ is given by
\begin{equation}\label{file_probability}
\begin{aligned}
\bar D_x&= \mathbb{E} \left[ \frac{L}{R_x}+\frac{L}{R_x}\frac{\rho_x}{\left(1-\rho_x \right)} \right]\\
&\mathop  \ge \limits^{\left( a \right)} \underbrace {\frac{L}{\mathbb{E} \left[R_x\right]}}_T+\underbrace {\frac{L}{\mathbb{E} \left[R_x\right]}\mathbb{E} \left[\frac{\rho_x}{\left(1-\rho_x \right)}\right]}_W\\
&=\frac{L}{\mathbb{E}[R_x]-\rho'_x},
\end{aligned}
\end{equation}
which is the lower bound of the mean delivery latency of typical user at certain BS. The first part of equation (a) is the transmission time $T$ and the second part is the waiting time $W$. $\rho_x= N_x \xi L/R_x$ is denoted as the traffic intensity of BS $x$, where $N_x$ is the number of users associated with BS $x$, and $\xi$ is the service requested rate of each user. Similarly, $\rho'_x= N_x \xi L$ is denoted as the traffic of BS $x$.
\end{lemma}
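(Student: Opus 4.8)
The plan is to treat the first displayed equality as the model's definition and to concentrate on the two genuinely mathematical steps: the inequality $(a)$ and the closing algebraic simplification. Conditioning on the instantaneous rate $R_x$, the content service time is the deterministic quantity $L/R_x$ and the queuing formula contributes the waiting term $(L/R_x)\rho_x/(1-\rho_x)$ with $\rho_x=N_x\xi L/R_x$; averaging over the randomness of $R_x$ (small-scale fading, aggregate interference, and the serving-link distance) produces the expectation in the first line. Throughout I assume the stability condition $\rho_x<1$, equivalently $\rho'_x=N_x\xi L<R_x$ almost surely, which is exactly what makes $1/(1-\rho_x)$ and $1/(R_x-\rho'_x)$ positive and finite.

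First I would dispatch the transmission term. Because $t\mapsto L/t$ is convex on $(0,\infty)$, Jensen's inequality gives $\mathbb{E}[L/R_x]\ge L/\mathbb{E}[R_x]$, which is precisely the first summand $T$ of the claimed lower bound.

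The main work, and the step I expect to be the obstacle, is justifying $(a)$ on the waiting term, namely $\mathbb{E}[(L/R_x)\,\rho_x/(1-\rho_x)]\ge (L/\mathbb{E}[R_x])\,\mathbb{E}[\rho_x/(1-\rho_x)]$: one cannot simply extract the deterministic factor $L/\mathbb{E}[R_x]$ from the expectation of a product. The key observation is that both factors are \emph{monotone} functions of the single random variable $R_x$. The map $R_x\mapsto 1/R_x$ is decreasing; and since $\rho_x=\rho'_x/R_x$ is decreasing while $s\mapsto s/(1-s)$ is increasing on $[0,1)$, the composite $R_x\mapsto \rho_x/(1-\rho_x)$ is decreasing as well. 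Two similarly ordered (both decreasing) functions of the same variable are positively correlated, so Chebyshev's correlation (FKG) inequality yields $\mathbb{E}[(1/R_x)\,\rho_x/(1-\rho_x)]\ge \mathbb{E}[1/R_x]\,\mathbb{E}[\rho_x/(1-\rho_x)]$. Multiplying by $L$ and invoking Jensen once more, $\mathbb{E}[1/R_x]\ge 1/\mathbb{E}[R_x]$ (legitimate because $\rho_x/(1-\rho_x)\ge 0$), delivers the second summand $W$ and completes $(a)$.

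Finally I would collect the two bounds and simplify. Writing $\bar\rho_x:=\rho'_x/\mathbb{E}[R_x]$ for the traffic intensity evaluated at the average rate and replacing the residual expectation $\mathbb{E}[\rho_x/(1-\rho_x)]$ by its value $\bar\rho_x/(1-\bar\rho_x)$ at the mean rate, the two terms combine as $\frac{L}{\mathbb{E}[R_x]}\left(1+\frac{\bar\rho_x}{1-\bar\rho_x}\right)=\frac{L}{\mathbb{E}[R_x](1-\bar\rho_x)}$, and the identity $\mathbb{E}[R_x]\,\bar\rho_x=\rho'_x$ turns this into $\frac{L}{\mathbb{E}[R_x]-\rho'_x}$, the claimed closed form. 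I would flag that this last ``$=$'' is exact only under the mean-rate evaluation of the traffic intensity; keeping the expectation inside would instead give $\rho'_x\,\mathbb{E}[1/(R_x-\rho'_x)]\ge \rho'_x/(\mathbb{E}[R_x]-\rho'_x)$ by a further Jensen step, which is fully consistent with the lower-bound statement of the lemma.
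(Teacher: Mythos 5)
The paper never actually proves this lemma: the chain of (in)equalities displayed in the statement is the entire justification it offers, so your write-up is a genuine reconstruction rather than a parallel of an existing argument, and it is sound. Your two ingredients --- Jensen's inequality on the convex map $t\mapsto L/t$ for the transmission term, and the Chebyshev correlation inequality for similarly ordered (here, both decreasing) functions of $R_x$ for the waiting term --- correctly justify step $(a)$, which the paper simply asserts; and your observation that the final ``$=$'' holds only under the mean-rate evaluation of $\mathbb{E}\left[\rho_x/(1-\rho_x)\right]$, while keeping the expectation inside produces a further ``$\ge$'' by Jensen and hence still yields the claimed lower bound, is exactly the right diagnosis of the loosest link in the paper's chain. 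It is worth recording that there is a shorter route that bypasses both the correlation inequality and the mean-rate caveat: since $\rho_x=\rho'_x/R_x$, the quantity inside the first expectation collapses algebraically,
\begin{equation*}
\frac{L}{R_x}+\frac{L}{R_x}\frac{\rho_x}{1-\rho_x}=\frac{L}{R_x}\cdot\frac{1}{1-\rho_x}=\frac{L}{R_x-\rho'_x},
\end{equation*}
so that $\bar D_x=\mathbb{E}\left[L/(R_x-\rho'_x)\right]$, and a single application of Jensen's inequality to the convex function $t\mapsto L/(t-\rho'_x)$ on $t>\rho'_x$ gives $\bar D_x\ge L/(\mathbb{E}[R_x]-\rho'_x)$ in one step; the intermediate $T+W$ line is then merely a rewriting of the right-hand side. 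Both routes require the stability condition $\rho'_x<R_x$ almost surely, which you correctly make explicit while the paper leaves it implicit. One last point, outside the reach of any proof: the first line's waiting term $(L/R_x)\,\rho_x/(1-\rho_x)$ is the $M/M/1$ sojourn decomposition, whereas a true $M/D/1$ queue would carry a factor $1/2$ in the waiting time by Pollaczek--Khinchine; since the lemma takes that line as its model definition, your decision to treat it as definitional rather than derive it is the right one.
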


Through following the result from Lemma $4$, both the delivery latency of F-APs user and RRHs user can be obtained. Furthermore, the average delivery latency of typical user at cache-enable FRANs is available with the assist of both latency of fronthaul links and backhaul links.

\begin{corollary}
The average delivery latency of typical user at cache-enable FRAN based on maximal RSRP association policy can be obtained as
\begin{equation}\label{file_probability}
\begin{aligned}
D& = A_F\frac{L}{\mathbb{E}\left[R_F\right]-\rho'_F}+ \left(1-p_{hit}\right)A_FD_{back}\\
&+A_R \left[ \frac{L}{\mathbb{E}\left[R_R\right]-\rho'_R}+D_{front}\right],
\end{aligned}
\end{equation}
where $\rho'_F=N_F \xi L$ and $\rho'_R=N_R \xi L$ are the traffic of F-APs and RRHs, respectively. $D_{back}=d \rho'_{back}$ and $D_{front}=d \rho'_{front}$ denote the latency of backhaul links and fronthaul links respectively, where $d$ is a constant, $\rho'_{back}=\sum\nolimits_{x \in {\Phi _F}} {\left(1-p_{hit}\right)N_F \xi L}$ is the traffic of backhaul links, and $\rho'_{front}=\sum\nolimits_{x \in {\Phi _R}} {N_R \xi L}$ is the traffic of fronthaul link.
\end{corollary}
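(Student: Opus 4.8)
The plan is to decompose the end-to-end latency of the typical user by conditioning on which tier it is associated with and, within the F-AP tier, on whether the requested content is cached locally. First I would invoke the tier-association probabilities $A_F$ and $A_R$ from Lemma~1 and apply the single-BS delay formula of Lemma~4 separately to the two tiers, yielding the access-link-plus-queue contributions $L/(\mathbb{E}[R_F]-\rho'_F)$ and $L/(\mathbb{E}[R_R]-\rho'_R)$. These two terms capture exactly the transmission and waiting time of the $M/D/1$ model once the content is already available at the serving BS, so they require no further argument beyond substituting $R_x\in\{R_F,R_R\}$ and $\rho'_x\in\{\rho'_F,\rho'_R\}$ into Lemma~4.

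Next I would add the core-link contributions according to where the content must be retrieved from. A user served by an RRH caches nothing at the edge, so it must always fetch the file from the BBU pool across the fronthaul; this adds $D_{front}$ to every RRH user and hence the term $A_R D_{front}$ after weighting by $A_R$. An F-AP user, by contrast, retrieves the file locally with probability $p_{hit}$ at no extra link latency and must resort to the backhaul through the network controller only with the complementary probability $1-p_{hit}$; by the law of total probability the backhaul contribution is weighted by $(1-p_{hit})A_F$, producing the term $(1-p_{hit})A_F D_{back}$. Summing the three tier-weighted access, backhaul, and fronthaul contributions via the law of total expectation then yields the stated expression for $D$.

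It remains to express the two core-link delays in the claimed form. I would model each fronthaul and backhaul link as a shared queue whose mean delay scales linearly with its offered load, writing $D_{back}=d\,\rho'_{back}$ and $D_{front}=d\,\rho'_{front}$ with a common proportionality constant $d$. Here $\rho'_{back}=\sum_{x\in\Phi_F}(1-p_{hit})N_F\xi L$ aggregates the uncached F-AP traffic pushed onto the backhaul, while $\rho'_{front}=\sum_{x\in\Phi_R}N_R\xi L$ aggregates the RRH traffic carried on the fronthaul, each being the per-BS request traffic $N\xi L$ summed over the relevant point process and thinned by $(1-p_{hit})$ in the cached case.

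The main obstacle I anticipate is justifying the additive-and-linear structure rather than the algebra. Specifically, one must argue that the access-link delay of Lemma~4 and the core-link delay constitute statistically independent stages, so that their means simply add, and that the aggregated core links can be treated as a single queue whose delay is linear in the offered traffic (the constant $d$). The asymmetric weighting---$(1-p_{hit})A_F$ on the backhaul but the full $A_R$ on the fronthaul---also needs a clean statement, since it rests entirely on the modeling assumption that edge caching benefits only F-AP users and that every RRH request traverses the fronthaul regardless of popularity.
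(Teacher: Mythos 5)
Your proposal is correct and takes essentially the same route as the paper: the paper's proof likewise writes $D$ as the weighted average of the three cases (F-AP with the requested content, F-AP without it plus $D_{back}$, and RRH plus $D_{front}$), with the access-plus-queue delays taken from Lemma~4 and the weights $p_{hit}A_F$, $(1-p_{hit})A_F$, and $A_R$, then merges the two F-AP terms to obtain the stated expression. Your additional remarks on justifying the linear core-link delay model and the independence of stages go beyond the paper, which simply asserts these as modeling assumptions, but they do not change the argument.
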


\begin{proof}
The delivery latency of typical user at cache-enable FRAN is the weighted average among delivery latency of F-APs with the requested content, F-APs without the requested content and RRHs, which can be expressed as
\begin{equation}\label{file_probability}
\begin{aligned}
D& =p_{hit} A_F\frac{L}{\mathbb{E}\left[R_F\right]-\rho'_F}\\
&+ \left(1-p_{hit}\right)A_F\left[\frac{L}{\mathbb{E}\left[R_F\right]-\rho'_F}+D_{back}\right]\\
&+A_R \left[ \frac{L}{\mathbb{E}\left[R_R\right]-\rho'_R}+D_{front}\right],
\end{aligned}
\end{equation}
where these three items denote the weighted delivery latency of typical user at aforementioned three type of BSs, respectively. Simplifying the above equation, and thus Corollary 3 can be proved.
\end{proof}

\section{Minimal Delay Association Policy}
While the traditional maximal RSRP association policy is not suitable for the cache-enabled and latency aware F-RANs, an advanced association policy is presented, named as minimal delay association policy, to take full advantage of the edge cache and highlight the delivery latency performance while guaranteeing SE in F-RANs. Based on the minimal delay association policy, closed-form expressions of successful delivery probability, average ergodic rate, and delivery latency under cache-enabled F-RANs are proposed in this section.
\subsection{Successful Delivery Probability}
Under the minimal delay association policy, typical user is associated with the BS, which has minimal average delivery latency. Due to the introduction of caching, delivery latency is not only related to the density and power of BSs, but also the size and location of cache.

\begin{lemma}
The successful delivery probability of typical user based on minimal delay association policy can be expressed as (19) shown in the bottom at next page,
\setcounter{equation}{18}
\begin{figure*}[hb]
\hrulefill
\begin{equation}\label{file_probability}
\begin{aligned}
S\left( \left\{\eta_i\right\},\alpha \right)& = \lambda_R \int_{R^2} exp \Big( -\pi r^2 {\Big(\frac{\eta_R}{\zeta P_R}\Big)}^{2/\alpha} C \left( \alpha \right) \Big( \lambda_R \left(\upsilon P_R\right)^{2/{\alpha}}+\sum\limits_{m \in \left\{ {{F_c},{{\tilde F}_c}} \right\}} \lambda_m {P_m}^{2/{\alpha}}  \Big)\Big)exp\Big(-\frac{\eta_R r^\alpha \sigma^2}{\zeta P_R}\Big)dr\\
&+\sum\limits_{i \in \left\{ {{F_c},{{\tilde F}_c}} \right\}} \lambda_i \int_{R^2} exp \Big( -\pi r^2 {\Big(\frac{\eta_i}{P_i}\Big)}^{2/\alpha} C \left( \alpha \right) \sum\limits_{m \in \left\{ {{F_c},{{\tilde F}_c},R} \right\}} \lambda_m {P_m}^{2/{\alpha}}  \Big)exp\Big(-\frac{\eta_R r^\alpha \sigma^2}{\zeta P_R}\Big)dr,
\end{aligned}
\end{equation}
\end{figure*}
where $\lambda_{F_c}=p_{hit}\lambda_{F}$ and $\lambda_{{\tilde F}_c}=\left(1-p_{hit}\right)\lambda_{F}$ are the density of F-APs with and without requested content, respectively. $\eta_{F_c}=N_{F_c} \xi L+L/{\beta_{F_c}}$, $\eta_{{\tilde F}_c}=N_{{\tilde F}_c}\xi L+L/\left( {\beta_{{\tilde F}_c}}-D_{back} \right)$, and $\eta_{R}=N_{R}\xi L+L/\left( \beta_{R}-D_{front}\right)$ denote the SINR thresholds of F-APs with requested content, F-APs without requested content and RRHs, respectively. Here, the traffic of backhaul is replaced with $\rho'_{back}=\sum\nolimits_{x \in {\Phi _{\tilde F_c}}} {N_{\tilde F_c} \xi L}$, compared with the maximal RSRP association policy. In addition, $C \left( \alpha\right)=\left({2\pi/\alpha}\right) / sin\left({2\pi/\alpha}\right)$.
\end{lemma}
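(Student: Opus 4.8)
The plan is to follow the standard multi-tier stochastic-geometry recipe for SINR coverage, adapted to the three BS classes $F_c$, $\tilde F_c$, and $R$. First I would use the conversion asserted in the lemma: under the minimal delay policy each class carries its own SINR requirement $\eta_i$, obtained by inverting the latency expression of Lemma 4 (so $\eta_{F_c}=N_{F_c}\xi L+L/\beta_{F_c}$, and analogously for the other two classes with the backhaul and fronthaul offsets folded in). Thus ``successful delivery'' reduces to the event that the serving BS delivers SINR above its class threshold. The observation that collapses this into a single sum over classes is that, for thresholds at or above $0$ dB, at most one BS in the whole network can exceed its threshold: if two BSs both had SINR above one, each would have to dominate an aggregate already containing the other, which is impossible. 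Hence the coverage events of distinct BSs are mutually exclusive, and the success probability equals the plain sum over all BSs of the individual per-BS coverage probabilities.

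Second, I would apply the Campbell--Mecke formula to turn each PPP sum into a spatial integral: for class $i$, $\mathbb{E}\big[\sum_{x\in\Phi_i}\mathbf{1}\{\mathrm{SINR}_x>\eta_i\}\big]=\lambda_i\int_{\mathbb{R}^2}\Pr(\mathrm{SINR}>\eta_i\mid r=|x|)\,dx$. Because the coverage events are disjoint, summing over $i\in\{F_c,\tilde F_c,R\}$ produces exactly the two-line structure of (19), with prefactors $\lambda_R$ and $\lambda_i$ and \emph{no} void-probability correction: we sum over all BSs rather than only the nearest, and Slivnyak's theorem lets the interference field retain its full density after the tagged BS (a measure-zero point) is removed.

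Third, I would evaluate the conditional coverage at a fixed serving distance $r$. With Rayleigh fading $|h|^2\sim\exp(1)$, the event $\{Phr^{-\alpha}/(I+\sigma^2)>\eta\}$ gives $\Pr(\cdot\mid r)=e^{-\eta r^\alpha\sigma^2/P}\,\mathcal{L}_I(\eta r^\alpha/P)$, where $P$ is the effective serving power ($\zeta P_R$ for an RRH, $P_i$ for an F-AP) and $\mathcal{L}_I$ is the Laplace transform of the aggregate interference. Since the interference splits into independent PPP contributions, $\mathcal{L}_I$ factorizes over classes, and for a class of density $\lambda_m$ and interfering power $P_m$ the PPP Laplace functional gives $\mathcal{L}_{I_m}(s)=\exp\big(-2\pi\lambda_m\int_0^\infty \tfrac{sP_m}{v^\alpha+sP_m}v\,dv\big)$. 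The substitution $t=v^2$ followed by $w=t/(sP_m)^{2/\alpha}$ reduces the integral to $\tfrac{(sP_m)^{2/\alpha}}{2}\int_0^\infty \tfrac{dw}{1+w^{\alpha/2}}$, and the Beta-type identity $\int_0^\infty dw/(1+w^{\alpha/2})=(2\pi/\alpha)/\sin(2\pi/\alpha)=C(\alpha)$ yields the factor $\exp\big(-\pi r^2(\eta/P)^{2/\alpha}\lambda_m P_m^{2/\alpha}C(\alpha)\big)$. Using the residual power $\upsilon P_R$ for RRH interference in the RRH-serving term and full powers for the F-AP-serving terms, multiplying the class factors, and attaching the noise exponential $e^{-\eta r^\alpha\sigma^2/P}$ with the corresponding serving power reproduces each integrand in (19).

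The step I expect to be the main obstacle is rigorously justifying the mutual-exclusivity reduction of the first paragraph, i.e. that the success probability is a clean sum over all BSs rather than an inclusion--exclusion expression. This is exact only when every $\eta_i\ge1$; if some threshold drops below $0$ dB the coverage events can overlap and the sum overcounts. I would therefore either restrict to the operating regime $\eta_i\ge1$ (reasonable here, since the fronthaul and backhaul offsets inflate the $R$ and $\tilde F_c$ thresholds) or present the result as the expression obtained under this standard coverage convention. The remaining ingredients---the Campbell--Mecke step and the Laplace-functional integral---are routine, the only real bookkeeping being to assign the correct effective powers ($\zeta P_R$, $\upsilon P_R$, $P_F$) to the serving and interfering roles of each class.
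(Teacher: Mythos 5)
Your proposal follows essentially the same route as the paper's Appendix B: converting latency targets to per-class SINR thresholds, invoking mutual exclusivity of coverage events under the threshold assumption (the paper states this as $\eta_F>1$ and $\eta_R>\zeta/\upsilon$, matching the caveat you flagged) to replace the union by a sum, and then applying the Campbell--Mecke formula with the standard PPP Laplace-functional evaluation yielding $C(\alpha)$. Your write-up is, if anything, more explicit than the paper's in justifying the exclusivity step and in carrying out the interference integral.
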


\begin{proof}
See Appendix B.
\end{proof}

Lemma 5 gives a general expression for successful delivery probability based on minimal delay association policy. Note that SINR thresholds $\eta_{i}> \zeta / \upsilon, \forall i$ are assumed, and typical user can be associated with at most one BS. Therefore, successful delivery probability can be defined as the sum of the probabilities that each BS connects to the user. This result can be simplified further and reduces to a simple closed-form expression for the interference-limited case, which is given by Corollary 4.

\begin{corollary}
When an interference-limited network is considered, the successful delivery probability of typical user based on minimal delay association policy can be simplified as
\setcounter{equation}{19}
\begin{equation}\label{file_probability}
\begin{aligned}
&S\left( \left\{\eta_i\right\},\alpha \right)=\frac{\sum\nolimits_{i \in \left\{ {{F_c},{{\tilde F}_c}} \right\}} \lambda_i \left(P_i\right)^{2/\alpha}  \left(\eta_i\right)^{-2/\alpha}}{C \left( \alpha \right) \sum\nolimits_{i \in \left\{ {{F_c},{{\tilde F}_c},R} \right\}} \lambda_i \left(P_i\right)^{2/\alpha} }\\
&+\frac{\lambda_R \left(\zeta P_R\right)^{2/\alpha}  \left(\eta_R\right)^{-2/\alpha}}{C \left( \alpha \right) \left[\lambda_R \left(\upsilon P_R\right)^{2/\alpha}+\sum\nolimits_{i \in \left\{ {{F_c},{{\tilde F}_c}} \right\}} \lambda_i \left(P_i\right)^{2/\alpha} \right]}.
 \end{aligned}
\end{equation}
\end{corollary}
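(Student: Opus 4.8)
The plan is to derive Corollary~4 directly from the general result of Lemma~5, equation~(19), by specializing to the interference-limited regime $\sigma^2 \to 0$ and then evaluating the resulting radial integrals in closed form. The reason (19) is left in integral form is precisely the thermal-noise factor $\exp(-\eta_R r^\alpha \sigma^2/(\zeta P_R))$, which multiplies the Gaussian-type factor $\exp(-\pi r^2 K)$ and, for general path-loss exponent $\alpha$, admits no elementary antiderivative. Removing the noise eliminates this obstruction and collapses each tier's contribution to a single tractable planar integral.

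First I would set $\sigma^2 = 0$, so that every noise factor in (19) reduces to unity, both in the RRH term and in the sum over $i \in \{F_c, \tilde{F}_c\}$. Each surviving term then has the generic form $\lambda \int_{R^2} \exp(-\pi K r^2)\, dr$, where $K$ gathers $C(\alpha)$, the relevant SINR threshold $\eta$, the transmit power, and the tier densities. Concretely, for the F-AP tiers the constant is $K_i = (\eta_i/P_i)^{2/\alpha} C(\alpha) \sum_{m \in \{F_c, \tilde{F}_c, R\}} \lambda_m (P_m)^{2/\alpha}$, while for the RRH tier it is $K_R = (\eta_R/(\zeta P_R))^{2/\alpha} C(\alpha) \big( \lambda_R (\upsilon P_R)^{2/\alpha} + \sum_{m \in \{F_c, \tilde{F}_c\}} \lambda_m (P_m)^{2/\alpha} \big)$.

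Next I would evaluate the planar integral by passing to polar coordinates over $R^2$, which yields $\int_{R^2} \exp(-\pi K r^2)\, dr = 2\pi \int_0^\infty r\, \exp(-\pi K r^2)\, dr = 1/K$ after the substitution $u = r^2$. Multiplying by the tier density and inserting $K_i$ gives, for each F-AP tier, $\lambda_i/K_i = \lambda_i (P_i)^{2/\alpha} (\eta_i)^{-2/\alpha} / \big( C(\alpha) \sum_m \lambda_m (P_m)^{2/\alpha} \big)$; the analogous step for the RRH tier produces the fraction with $(\zeta P_R)^{2/\alpha}$ in the numerator and $\lambda_R (\upsilon P_R)^{2/\alpha} + \sum_m \lambda_m (P_m)^{2/\alpha}$ in the denominator. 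Summing the two F-AP contributions over $i \in \{F_c, \tilde{F}_c\}$ and adding the RRH contribution reproduces the two fractions of (20) exactly.

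The integration itself is elementary, so the only point requiring care is bookkeeping of the exponents: the combination $\zeta P_R$ sitting inside the RRH exponent must be carried through the power-$2/\alpha$ inversion so that it emerges as $(\zeta P_R)^{2/\alpha}$ in the numerator rather than being folded into $K_R$, and the factor $C(\alpha)$ must be kept outside the density sum. Once the exponent algebra is organized this way, the decomposition into a per-tier sum is immediate from the additivity of the terms in (19), and no additional approximation beyond $\sigma^2 \to 0$ is invoked.
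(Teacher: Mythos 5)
Your proposal is correct and follows exactly the route the paper intends (and leaves implicit): specialize Lemma~5's expression (19) to $\sigma^2 \to 0$ so the noise exponentials vanish, then evaluate each remaining planar integral $\lambda \int_{R^2} \exp(-\pi K r^2)\,dr = \lambda/K$ in polar coordinates, which yields the two fractions of (20) term by term. Your bookkeeping of $(\zeta P_R)^{2/\alpha}$ versus the $\upsilon P_R$ term inside $K_R$ matches the paper's result, so there is nothing to add.
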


\begin{corollary}
According to the successful delivery probability of typical user, association probability based on minimal delay association policy for FAPs and RRHs are given below [23].
\begin{equation}\label{file_probability}
\begin{aligned}
A_j&=\frac{\mathbb{P} \left( \bigcup\nolimits_{{x_j} \in \left\{ {\Phi_j} \right\}} {D_{x_j}< \beta_j} \right) }{Pr \left( \bigcup\nolimits_{i \in \left\{ {{F_c},{{\tilde F}_c},R} \right\},{x_i} \in \left\{ {\Phi_i} \right\}} {D_{x_i}< \beta_i} \right)}\\
&=\frac{S_{j}\left(\left\{\eta_i\right\},\alpha \right)}{S_{F_c}\left(\left\{\eta_i\right\},\alpha \right)+S_{{\tilde F}_c}\left(\left\{\eta_i\right\},\alpha \right)+S_R\left(\left\{\eta_i\right\},\alpha \right)},
\end{aligned}
\end{equation}
where $j \in \left\{ {{F_c},{{\tilde F}_c},R} \right\}$ and $S_{F_c}\left(\left\{\eta_i\right\},\alpha \right)$ is the successful delivery probability of typical user at F-APs with requested content, and it can be expressed as
\begin{equation}\label{file_probability}
\begin{aligned}
S_{F_c}\left(\left\{\eta_i\right\},\alpha \right)&=\mathbb{P} \left( \bigcup\nolimits_{{x_{F_c}} \in \left\{ {\Phi_{F_c}} \right\}} {D_{x_{F_c}}< \beta_{F_c}} \right)\\
&=\frac{ \lambda_{F_c} \left(P_{F_c}\right)^{2/\alpha}  \left(\eta_{F_c}\right)^{-2/\alpha}}{C \left( \alpha \right) \sum\nolimits_{i \in \left\{ {{F_c},{{\tilde F}_c},R} \right\}} \lambda_i \left(P_i\right)^{2/\alpha} },
\end{aligned}
\end{equation}
$S_{{\tilde F}_c}\left(\left\{\eta_i\right\},\alpha \right)$ is the successful delivery probability of typical user at F-APs without requested content, which is obtained as
\begin{equation}\label{file_probability}
\begin{aligned}
S_{{\tilde F}_c}\left(\left\{\eta_i\right\},\alpha \right)&=\mathbb{P} \left( \bigcup\nolimits_{{x_{{\tilde F}_c}} \in \left\{ {\Phi_{{\tilde F}_c}} \right\}} {D_{x_{{\tilde F}_c}}< \beta_{{\tilde F}_c}} \right)\\
&=\frac{ \lambda_{{\tilde F}_c} \left(P_{{\tilde F}_c}\right)^{2/\alpha}  \left(\eta_{{\tilde F}_c}\right)^{-2/\alpha}}{C \left( \alpha \right) \sum\nolimits_{i \in \left\{ {{F_c},{{\tilde F}_c},R} \right\}} \lambda_i \left(P_i\right)^{2/\alpha} },
\end{aligned}
\end{equation}
meanwhile, $S_R\left(\left\{\eta_i\right\},\alpha \right)$ is the successful delivery probability of typical user at RRHs, which is obtained as
\begin{equation}\label{file_probability}
\begin{aligned}
&S_R\left(\left\{\eta_i\right\},\alpha \right)=\mathbb{P} \left( \bigcup\nolimits_{{x_{R}} \in \left\{ {\Phi_{R}} \right\}} {D_{x_{R}}< \beta_{R}} \right)\\
&=\frac{\lambda_R \left(\zeta P_R\right)^{2/\alpha}  \left(\eta_R\right)^{-2/\alpha}}{C \left( \alpha \right) \left[\lambda_R \left(\upsilon P_R\right)^{2/\alpha}+\sum\nolimits_{i \in \left\{ {{F_c},{{\tilde F}_c}} \right\}} \lambda_i \left(P_i\right)^{2/\alpha} \right]}.
\end{aligned}
\end{equation}
\end{corollary}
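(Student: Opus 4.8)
The plan is to derive Corollary 5 as a direct consequence of Lemma 5 together with the mutual exclusivity of the per-tier delay events, so that each association probability emerges as a simple conditional probability. The starting point is the observation recorded immediately after Lemma 5: under the assumption $\eta_i > \zeta/\upsilon$ for all $i$, the typical user can be associated with at most one BS. Concretely, I would first show that the events $\{\,\bigcup_{x_i \in \Phi_i} D_{x_i} < \beta_i\,\}$ for $i \in \{F_c, \tilde{F}_c, R\}$ are pairwise disjoint. Using the delay-to-SINR conversion established in the model, the event $D_{x_i} < \beta_i$ is equivalent to $\gamma_{x_i} > \eta_i$, i.e.\ the served BS clears an SINR threshold $\eta_i$. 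Since the effective thresholds satisfy $\eta_i > \zeta/\upsilon$, no two BSs can simultaneously exceed their thresholds (the received signal powers cannot each dominate more than half of the aggregate received power), so at most one tier can provide successful delivery. Establishing this disjointness rigorously is the crux, because it is exactly what turns the total successful delivery probability into an exact sum of per-tier contributions.

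Next I would identify each per-tier term $S_j$ with the corresponding summand in the expression (19) of Lemma 5. The RRH integral on the first line of (19) yields $S_R$, while the two F-AP contributions on the second line, upon splitting the sum over $i \in \{F_c, \tilde{F}_c\}$, yield $S_{F_c}$ and $S_{\tilde{F}_c}$ separately. Passing to the interference-limited regime $\sigma^2 \to 0$ and carrying out the radial integration exactly as in the derivation of Corollary 4 produces the closed forms (22)--(24). I would then verify directly that $S_{F_c} + S_{\tilde{F}_c} + S_R$ reproduces the aggregate $S(\{\eta_i\},\alpha)$ of Corollary 4, which confirms that the decomposition is additive and accounts for the entire successful delivery probability with no overlap left unattributed.

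With the additive decomposition in hand, the association probability follows by definition of the minimal delay policy. The user is associated with tier $j$ precisely on the event $\{\bigcup_{x_j} D_{x_j} < \beta_j\}$, whose probability is $S_j$, and on the disjoint union of these events the minimal-delay winner coincides with the unique tier clearing its threshold. By the law of total probability over the mutually exclusive tier events, the conditional probability of associating with tier $j$ given that the user is served at all is $A_j = S_j / (S_{F_c} + S_{\tilde{F}_c} + S_R)$, which is exactly (21). Substituting the explicit formulas (22)--(24) into the numerator for $j \in \{F_c, \tilde{F}_c, R\}$ then completes the statement.

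I expect the main obstacle to be the disjointness argument; everything downstream is bookkeeping once the per-tier events are shown to be mutually exclusive. The subtlety is that the minimal-delay criterion is a selection over all BSs across all three tiers simultaneously, so I would need to confirm that the threshold conversion $D_{x_i} < \beta_i \iff \gamma_{x_i} > \eta_i$ together with the condition $\eta_i > \zeta/\upsilon$ jointly preclude two distinct tiers from clearing their thresholds at once. This is where the assumption stated after Lemma 5 does the real work; once it is invoked, the ratio form of $A_j$ and the matching of the integral summands to $S_{F_c}, S_{\tilde{F}_c}, S_R$ are routine.
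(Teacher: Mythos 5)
Your proposal is correct and follows essentially the same route as the paper, which states this corollary without a separate proof precisely because it is the implicit chain you describe: the threshold assumption $\eta_i > \zeta/\upsilon$ (invoked after Lemma 5 and in Appendix B) makes the per-BS delay events mutually exclusive, so the union probability in Lemma 5 splits additively into the per-tier terms $S_{F_c}, S_{\tilde F_c}, S_R$, whose interference-limited closed forms are exactly the summands of Corollary 4, and the association probability is then the conditional-probability ratio $A_j = S_j/(S_{F_c}+S_{\tilde F_c}+S_R)$ as in [23]. Your identification of the disjointness argument as the crux, and of everything downstream as bookkeeping, matches where the paper's own logic places the weight.
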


\subsection{Average Ergodic Rate}
\begin{lemma}
The average ergodic rate of typical user based on minimal delay association policy is given by
\begin{equation}\label{file_probability}
\begin{aligned}
&R\left( \alpha \right)=\mathbb{E} \left[ log_2 \left( 1+  \mathop {\max }\limits_{x \in \bigcup\nolimits_i {{\Phi _i}}} \gamma_x \right) \right]\\
&= \int_0^\infty P \left[ \mathop {\max }\limits_{x \in \bigcup\nolimits_i {{\Phi _i}}} \gamma_x \ge {2^t-1} \right]dt\\
&=\int_0^\infty S\left( 2^t-1,\alpha \right) dt,
\end{aligned}
\end{equation}
where $S\left( {2^t-1},\alpha \right) $ is the success delivery probability of typical user based on minimal delay association policy by fixing $\eta_i = 2^t-1, \forall i$.
\end{lemma}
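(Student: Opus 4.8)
The plan is to follow the same tail-integral (layer-cake) argument used in the proof of Lemma 3, but with the single-BS SINR $\gamma_x$ replaced by the maximum of $\gamma_x$ over all BSs selected by the minimal delay association policy, and then to recognize the resulting complementary CDF as the successful delivery probability of Lemma 5 evaluated at a common threshold.

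First, taking $R = \log_2\bigl(1 + \max_{x \in \bigcup_i \Phi_i} \gamma_x\bigr)$ as the rate realized at the associated BS, I note that $R$ is a nonnegative random variable. Hence its mean admits the tail-integral representation
\begin{equation}
\mathbb{E}[R] = \int_0^\infty P(R \geq t)\, dt,
\end{equation}
which establishes the first equality. Using the strict monotonicity of $t \mapsto \log_2(1+t)$, the rate event transforms into an SINR event, $P(R \geq t) = P\bigl(\max_{x} \gamma_x \geq 2^t - 1\bigr)$, giving the second equality.

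The remaining task is to identify $P\bigl(\max_x \gamma_x \geq 2^t - 1\bigr)$ with the successful delivery probability $S(2^t-1,\alpha)$ obtained from Lemma 5 by fixing $\eta_i = 2^t - 1$ for every tier $i \in \{F_c, \tilde F_c, R\}$. Here I would invoke the standing assumption $\eta_i > \zeta/\upsilon$ together with the remark following Lemma 5 that the typical user can be associated with at most one BS: under these conditions the threshold-crossing events contributed by distinct BSs are mutually exclusive, so the probability that some BS clears its threshold decomposes as the disjoint sum $S_{F_c} + S_{\tilde F_c} + S_R$ appearing in Lemma 5. When all thresholds coincide at the common value $2^t - 1$, the event that some BS clears the threshold is exactly $\{\max_x \gamma_x \geq 2^t - 1\}$, whence $P(\max_x \gamma_x \geq 2^t - 1) = S(2^t - 1, \alpha)$. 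Substituting into the tail integral yields the claimed third equality.

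The main obstacle will be justifying this last identification rigorously, i.e., verifying that the additive three-tier structure of $S(\{\eta_i\},\alpha)$ in Lemma 5 genuinely collapses to the CCDF of the single maximum $\max_x \gamma_x$ once the thresholds are made common. The union $\{\max_x \gamma_x \geq \eta\} = \bigcup_x \{\gamma_x \geq \eta\}$ always holds, but rewriting its probability as the \emph{sum} over tiers requires the threshold-crossing events to be pairwise disjoint, which is precisely the at-most-one-association property enforced by $\eta_i > \zeta/\upsilon$; without it the tier-wise probabilities would overlap and the sum would overcount. One delicate point is that for small $t$ the threshold $2^t-1$ is near zero and may fall below $\zeta/\upsilon$, so strictly speaking the identification is valid in the regime where the at-most-one-association assumption applies. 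Care is also needed to carry the tier-dependent effective powers (the coefficients $\zeta$ and $\upsilon$ in the RRH SINR) consistently through the common-threshold substitution, although these do not affect the disjointness reduction itself.
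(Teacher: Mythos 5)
Your proposal is correct and follows essentially the same route as the paper, which establishes this lemma exactly as it establishes Lemma 3: the tail-integral (layer-cake) identity for the nonnegative rate, the monotone change of variable turning the rate event into the SINR event $\{\max_x \gamma_x \ge 2^t-1\}$, and the identification of the resulting complementary CDF with the successful delivery probability of Lemma 5 at the common threshold. Your closing caveat --- that the additive per-tier closed form of $S\left(\left\{\eta_i\right\},\alpha\right)$ only coincides with the CCDF of the maximum when the threshold $2^t-1$ exceeds $\zeta/\upsilon$ (so the threshold-crossing events are disjoint), which fails for small $t$ --- is a genuine subtlety that the paper passes over in silence.
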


\subsection{Delivery Latency}
Similar to the maximal RSRP association policy, delivery latency of minimal delay association policy is also the weighted average among delivery latency of F-APs with the requested content, F-APs without the requested content and RRHs, except the weight value.

\begin{lemma}
The average delivery latency of typical user at cache-enable F-RANs based on minimal delay association policy can be obtained as
\begin{equation}\label{file_probability}
\begin{aligned}
D&= A_{F_c}\frac{L}{\mathbb{E} \left[R_F\right]-\rho'_{F_c}}+ A_{{\tilde F}_c} \left[\frac{L}{\mathbb{E} \left[R_F\right]-\rho'_{{\tilde F}_c}}+D_{back}\right]\\
&+A_R \left[ \frac{L} {\mathbb{E}\left[R_R\right]-\rho'_R}+D_{front}\right],
\end{aligned}
\end{equation}
where $A_{F_c}$, $A_{{\tilde F}_c}$ and $A_R$ denote the association probability of typical user at F-APs with the requested content, F-APs without the requested content and RRHs, respectively. $\rho'_{F_c}=N_{F_c}\xi L$, $\rho'_{{\tilde F}_c}=N_{{\tilde F}_c}\xi L$ and $\rho'_R=N_R \xi L$ are the traffic of F-APs with the requested content, F-APs without the requested content and RRHs, respectively. $D_{back}$ and $D_{front}$ denotes the latency of backhaul and fronthaul links, respectively.
\end{lemma}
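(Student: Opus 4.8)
The plan is to follow the same weighted-average decomposition used in the proof of Corollary 3, but with the association weights now supplied by the minimal delay policy rather than the RSRP policy. Under minimal delay association the typical user is served by exactly one of three mutually exclusive categories of BS --- an F-AP holding the requested content ($F_c$), an F-AP lacking it ($\tilde F_c$), or an RRH --- with respective probabilities $A_{F_c}$, $A_{\tilde F_c}$, $A_R$ furnished by Corollary 6. Since these events partition the service space, the mean latency decomposes as $D = A_{F_c}\bar D_{F_c} + A_{\tilde F_c}\bar D_{\tilde F_c} + A_R \bar D_R$, where $\bar D_i$ denotes the conditional mean delivery latency given association with category $i$. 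Establishing this decomposition and then evaluating each $\bar D_i$ is the entire task.

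First I would compute each conditional latency. Lemma 4 provides, for any BS $x$ modeled as an $M/D/1$ queue, the lower bound $\bar D_x = L/(\mathbb{E}[R_x]-\rho'_x)$, which already bundles transmission time and waiting time. Applying this with the traffic terms $\rho'_{F_c}=N_{F_c}\xi L$, $\rho'_{\tilde F_c}=N_{\tilde F_c}\xi L$, and $\rho'_R=N_R\xi L$ yields the wireless-link latency for each category. For the categories whose content must first be fetched over a wired link I would then add the appropriate link delay: the backhaul term $D_{back}$ for $\tilde F_c$ users, whose requested content is absent from the local cache, and the fronthaul term $D_{front}$ for RRH users; an F-AP that already caches the content incurs no such penalty. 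Substituting these three conditional latencies into the weighted sum reproduces exactly the claimed expression.

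The point worth emphasizing --- and the main conceptual difference from Corollary 3 rather than a technical obstacle --- is how caching enters the weights. Under RSRP association the hit probability $p_{hit}$ appeared explicitly, splitting the single weight $A_F$ into $p_{hit}A_F$ and $(1-p_{hit})A_F$. Here the minimal delay policy already treats cached and uncached F-APs as distinct tiers with densities $\lambda_{F_c}=p_{hit}\lambda_F$ and $\lambda_{\tilde F_c}=(1-p_{hit})\lambda_F$, so $p_{hit}$ is absorbed into $A_{F_c}$ and $A_{\tilde F_c}$ through Corollary 6 and does not reappear. The only genuinely delicate step is ensuring that the rate expectations $\mathbb{E}[R_F]$ and $\mathbb{E}[R_R]$ are the ergodic rates \emph{conditioned} on the minimal delay association event, as characterized by Lemma 6 via the per-category success-delivery probabilities $S_i(\{\eta_i\},\alpha)$ of Corollary 6, rather than the unconditional RSRP rates; I would therefore invoke those conditional quantities explicitly when evaluating the denominators.
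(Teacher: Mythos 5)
Your proposal is correct and follows exactly the route the paper intends: the paper gives no separate proof of this lemma, merely remarking that it is the same weighted-average decomposition as in Corollary 3 with the weights replaced by the minimal-delay association probabilities of Corollary 6, which is precisely the argument you spell out (Lemma 4 for each conditional latency, backhaul penalty only for $\tilde F_c$, fronthaul penalty only for RRHs, and $p_{hit}$ absorbed into the tier densities $\lambda_{F_c}$, $\lambda_{\tilde F_c}$). Your closing observation about using the association-conditioned ergodic rates in the denominators is a sound clarification of a point the paper leaves implicit, not a deviation from its approach.
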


\section{Numerical Results}
In this section, numerical simulations are presented to evaluate successful delivery probability, average ergodic rate and delivery latency based on both the aforementioned two association policy, while a cluster based maximal caching hit association policy with varying cluster radius is provided as the benchmark. Meanwhile, tradeoff between ergodic rate and delivery latency is investigated.

The aforementioned two-tier network is simulated in a disc plane a radius of $5 km$, and typical user is set at the origin. Simulated results are averaged over $2\times10^4$ random realizations. In detail, the distribution intensity of RRHs $\lambda_R$ is assumed to be $2\times10^{-4}/m^2$. In order to reflect the impact of cache on system performance, the intensity of F-APs is set to be $1/40$ to $1/5$ times of intensity of RRHs, i.e., $5\times10^{-6}\sim 4 \times 10^{-5}/m^2$. Service requested rate $\xi$ is assumed to be $\{5,7,9\} \times 10^{-3}$, which implies the intensity of network load.
In particular, the simulation parameters are listed as follows in Table I.

\begin{table}[t]
\caption{Simulation Parameters}
\centering
\begin{tabular}{l|l}
Parameters & Value \\
\hline
Number of contents $N$ & $50$\\
length of each content $L$ & $2$ bits\\
Caching size $C_F$& $30 \sim 70$\\
Intensity of users $\lambda_u$& $4 \times 10^{-3}/m^2$\\
Intensity of RRH nodes $\lambda_R$& $2 \times 10^{-4}/m^2$\\
Intensity of F-AP nodes $\lambda_F$& $5 \times 10^{-6} \sim 4 \times 10^{-5}/m^2$\\
Path loss exponent $\alpha$ & 4 [17] \\
Zipf exponent $\tau$ & 1\\
Transmit power of RRHs $P_R$ & 23dBm [20]\\
Transmit power of F-APs $P_F$ & 43dBm [20]\\
Service requested rate $\xi$ & $\{5,7,9\} \times 10^{-3}$\\
fronthaul/backhaul exponent $d$ & $0.5 \sim 1.5$\\
\hline
\end{tabular}
\end{table}

\begin{figure}[t]
  \centering
  \includegraphics[width=0.5\textwidth]{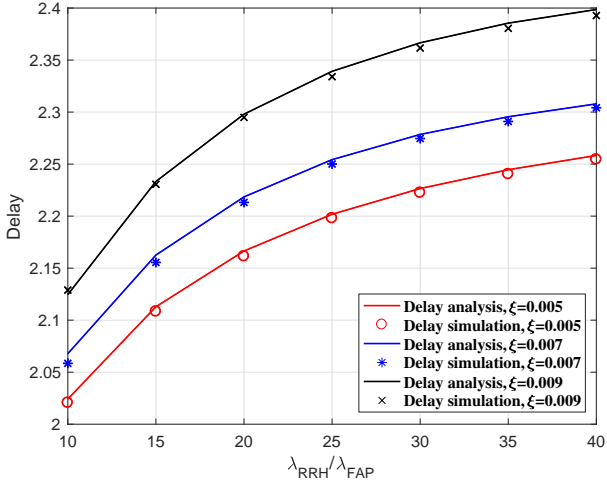}
  \vspace*{-5pt} \caption{Average delivery latency of maximal RSRP association policy with respect to RRH-FAP density ratio $\lambda_R / \lambda_F$ in the cases of different service requested rate $\xi$} \vspace*{-10pt}
  \label{1}
\end{figure}

Fig. 2 shows the average delivery latency achieved by maximal RSRP association policy with varying RRH-FAP density ratio $\lambda_R / \lambda_F$ in the cases of different service requested rate $\xi$. The analytical results closely match with the corresponding simulation results. It can be observed that the average delivery latency based on maximal RSRP association policy increases as RRH-FAP density ratio $\lambda_R / \lambda_F$ increases, this is because more users achieve services from RRHs with RRH-FAP density ratio increasing, i.e., the density of F-APs decreasing, which may bring larger delivery latency due to the latency of fronthaul links . Similarly, average delivery latency increases as service requested rate $\xi$ increases, this is because traffic density increases with service requested rate increasing, which brings larger waiting time in queuing system. Note that the service requested rate has a greater contribution to the region of higher RRH-FAP density ratio, since RRHs serve more users and are sensitive to traffic density, compared to lower RRH-FAP density ratio.

The average delivery latency of maximal RSRP association policy with varying caching hit ratio $p_{hit}$ in the cases of different service requested rate $\xi$ is shown in Fig. 3. It's obvious that average delivery latency decreases as caching hit ratio increases, since more users can achieve contents from local caches and thus mitigate the latency of fronthaul links. In addition, it is worth noting that service requested rate is not sensitive to caching hit ratio, this is because the association probability of maximal RSRP association policy depends only on the density and transmit power of both F-APs and RRHs, while caching size has no impact on the waiting time in queuing system.
\begin{figure}[t]
  \centering
  \includegraphics[width=0.5\textwidth]{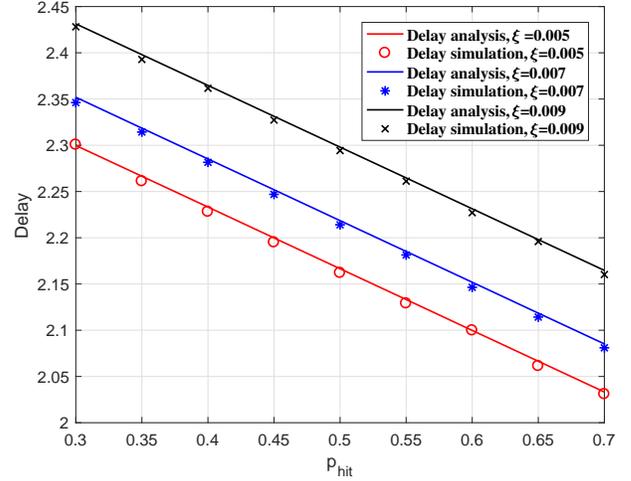}
  \vspace*{-5pt} \caption{Average delivery latency of maximal RSRP association policy with respect to caching hit ratio $p_{hit}$ in the cases of different service requested rate $\xi$} \vspace*{-10pt}
  \label{1}
\end{figure}
\begin{figure}[t]
  \centering
  \includegraphics[width=0.5\textwidth]{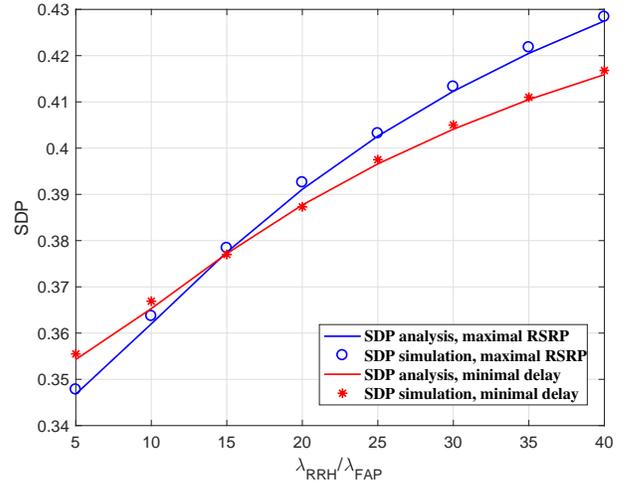}
  \vspace*{-5pt} \caption{Success delivery probability of both two user association policies with respect to RRH-FAP density ratio $\lambda_R / \lambda_F$} \vspace*{-10pt}
  \label{1}
\end{figure}

Fig. 4 and Fig. 5 show the success delivery probability and average ergodic rate achieved by both two user association policies with varying RRH-FAP density ratio $\lambda_R / \lambda_F$, respectively. It's obvious that success delivery probability increases as RRH-FAP density ratio increases in Fig. 4, since the centralized cooperative processing alleviates the interference among RRHs. Minimal delay association policy has larger success delivery probability in lower RRH-FAP density ratio region, while maximal RSRP association policy has larger success delivery probability in higher RRH-FAP density ratio region. The main reason is that users based on minimal delay association policy prefer to associate with F-APs to achieve lower delivery latency, at the cost of longer transmit distance and more intensive interference, resulting in lower success delivery probability when the number of F-APs is small. As we see from Fig. 5, the ergodic rate of minimal delay association policy is larger than that of maximal RSRP association policy in the major low RRH-FAP density ratio region, where the gap is shrunk with RRH-FAP density ratio increasing, and the ergodic rate of maximal RSRP association policy will exceed that of minimal delay association policy in extreme high RRH-FAP density ratio region, i.e., low F-APs density region, which can also be explained by the aforementioned reason. As a result, minimal delay association policy still guarantees SE demand, while pursuing lower delivery latency.

In Fig. 6, the average delivery latency based on both maximal RSRP and minimal delay association policies with varying RRH-FAP density ratio $\lambda_R / \lambda_F$ in the cases of different fronthaul link latency $D_{front}$ is investigated. As we see, minimal delay association policy is superior to maximal RSRP association policy with respect to the average delivery latency in the whole RRH-FAP density ratio region. Meanwhile, average delivery latency increases as fronthaul link latency increases. Note that minimal delay association policy is more sensitive to fronthaul link latency, since the user association of minimal delay association policy is based on delivery latency, and varying fronthaul link latency further affects average delivery latency through the corresponding changes on user association. Furthermore, fronthaul link latency has greater impact on the region of lower RRH-FAP density ratio, which has high F-APs density and more users can be accessed to F-APs due to the increasing of fronthaul link latency.
\begin{figure}[t]
  \centering
  \includegraphics[width=0.5\textwidth]{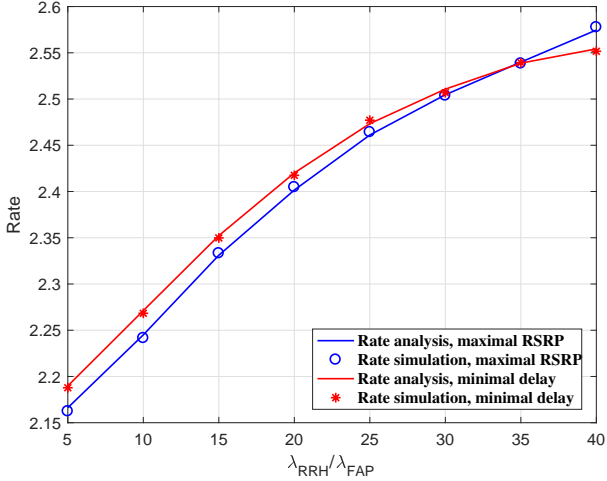}
  \vspace*{-5pt} \caption{Average ergodic rate of both two user association policies with respect to RRH-FAP density ratio $\lambda_R / \lambda_F$} \vspace*{-10pt}
  \label{1}
\end{figure}
\begin{figure}[t]
  \centering
  \includegraphics[width=0.5\textwidth]{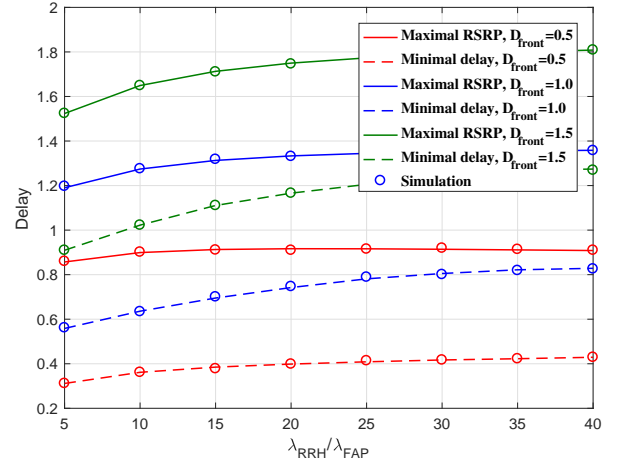}
  \vspace*{-5pt} \caption{Average delivery latency of both two user association policies with respect to RRH-FAP density ratio $\lambda_R / \lambda_F$ in the cases of different fronthaul link latency $D_{front}$} \vspace*{-10pt}
  \label{1}
\end{figure}

The average delivery latency of three user association policies with respect to RRH-FAP density ratio $\lambda_R / \lambda_F$ is shown in Fig. 7, where a cluster based maximal caching hit association policy with varying cluster radius is provided as the benchmark. With respect to this association policy, a cluster is designed as user-centered and typical user connects to the F-AP when the nearest F-AP with requested content is in the cluster. Otherwise, the user connects to RRHs.

It's obvious that the maximal RSRP association policy and minimal delay association policy are superior to the benchmarks with varying cluster radius. Meanwhile, the benchmarks are more sensitive to RRH-FAP density ratio $\lambda_R / \lambda_F$, especially for the benchmarks with large cluster radius. The average delivery latency of benchmarks increases as the cluster radiu increases, since users prefer to be associated with farther F-APs with requested content. Note that increasing $\lambda_R / \lambda_F$ increases average delivery latency based on benchmarks with large cluster radius, where associating F-APs are farther away from users and contribute to the major latency. While increasing $\lambda_R / \lambda_F$ decreases average delivery latency based on benchmarks with small cluster radius, which is dominated by the reduction of interference.

\begin{figure}[t]
  \centering
  \includegraphics[width=0.5\textwidth]{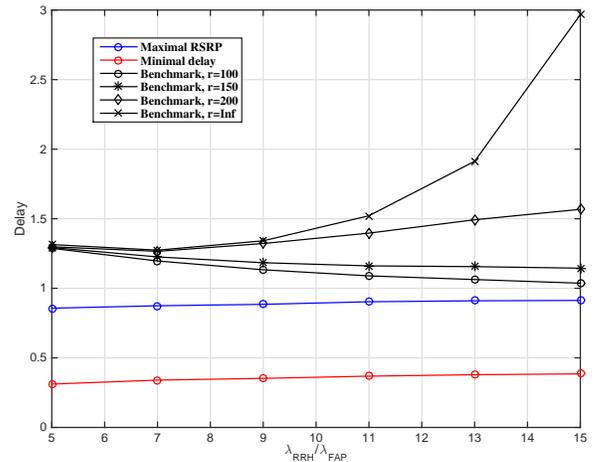}
  \vspace*{-5pt} \caption{Average delivery latency of three user association policies with respect to RRH-FAP density ratio $\lambda_R / \lambda_F$} \vspace*{-10pt}
  \label{1}
\end{figure}

\section{Conclusion}
In this paper, tradeoff between ergodic rate and delivery latency in F-RANs is explored. In order to highlight delivery latency metric, an advanced minimal delay association policy has been presented to minimize delivery latency while guaranteeing SE. By utilizing stochastic geometry, a two-tier cache-enabled F-RAN is considered according to an independent homogeneous PPP, while a M/D/1 queuing model is introduced to indicate queuing latency. As a result, closed-form expressions of successful delivery probability, average ergodic rate, and delivery latency based on both the maximal RSRP association policy and proposed minimal delay association policy are derived. The impacts of RRH-FAP density ratio, cache size, service requested rate, and fronthaul link latency on the above performance metrics are investigated. It is shown that minimal delay association policy has better delivery latency performance than maximal RSRP association policy with approximate SE. Furthermore, increasing the cache size of F-APs can more significantly reduce average delivery latency, compared with increasing the density of F-APs, since the latter may decrease average ergodic rate. Meanwhile, it is worth mentioning that, like the HD video services, other kinds of services with their specific caching strategies are also applicable in this paper.

\begin{appendices}
\section{}
We first derive the success probability conditioned on the typical user associated with F-APs, i.e., $k{r_{x_0}}^ {-\alpha} \le {r_{y_0}}^ {-\alpha}$, and the conditional success probability of typical user at F-APs is given by (27) shown in the bottom at next page,
\setcounter{equation}{26}
\begin{figure*}[hb]
\hrulefill
\begin{equation}\label{file_probability}
\begin{aligned}
S_F\left( \delta,\alpha \right)&=\mathbb{P} \Big( \frac{P_F{\left| h_{x_0} \right|}^2{r_{x_0}}^ {-\alpha}} {I_{F}+I_R+{\sigma}^2} \ge \delta |k{r_{x_0}}^ {-\alpha} \le {r_{y_0}}^ {-\alpha} \Big)=\int_0^\infty \mathbb{P} \Big({\left| h_{x_0} \right|}^2 \ge \frac{\delta {r_{x_0}}^ {\alpha}}{P_F} \left( I_{F}+I_R \right) \Big) f_F \left( r|k{r_{x_0}}^ {-\alpha} \le {r_{y_0}}^ {-\alpha}\right) dr\\
&\mathop  = \limits^{\left( a \right)}\int_0^\infty { \mathbb{E}\Big[ exp \Big( -\frac{\delta r^{\alpha}I_{F}}{P_F} \Big) \Big]}\mathbb{E} \Big[ exp \Big(-\frac{\delta r^{\alpha}I_{R}}{P_F} \Big) \Big] f_F \left( r|k{r_{x_0}}^ {-\alpha} \le {r_{y_0}}^ {-\alpha} \right)dr\\
&\mathop  = \limits^{\left( b \right)}\int_0^\infty exp\left( -\pi r^2 \lambda_F \rho\left( \delta, \alpha \right) \right) exp\Big( -\pi \left( kr \right)^2 \lambda_R \rho\Big(  \frac{\delta P_R}{k^{\alpha}P_F}, \alpha \Big) \Big)2\pi r\left( \lambda_F +k^2 \lambda_R\right)exp\left(-\pi r^2\left(\lambda_F +k^2 \lambda_R\right)\right)dr,
\end{aligned}
\end{equation}
\end{figure*}
\setcounter{equation}{28}
\begin{figure*}[hb]
\begin{equation}\label{file_probability}
\begin{aligned}
S_R\left( \delta,\alpha \right)&=\mathbb{P} \Big(\frac{ P_R \zeta {\left| h_{y_0} \right|}^2{r_{y_0}}^ {-\alpha}} {I_{F}+\upsilon I_{R}+{\sigma}^2} \ge \delta |k{r_{x_0}}^ {-\alpha} > {r_{y_0}}^ {-\alpha} \Big)=\int_0^\infty \mathbb{P} \Big({\left| h_{y_0} \right|}^2 \ge \frac{\delta {r_{y_0}}^ {\alpha}}{\zeta P_R } \left( I_{F}+\upsilon I_R \right) \Big) f_R \left( r|k{r_{x_0}}^ {-\alpha} > {r_{y_0}}^ {-\alpha}\right) dr\\
&\mathop  = \limits^{\left( a \right)}\int_0^\infty {\mathbb{E}\Big[ exp \Big( -\frac{\delta r^{\alpha}I_{F}}{\zeta P_R} \Big) \Big]}\mathbb{E} \Big[ exp \Big(-\frac{\delta r^{\alpha} \upsilon I_{R}}{\zeta P_R} \Big) \Big] f_R \left( r|k{r_{x_0}}^ {-\alpha} > {r_{y_0}}^ {-\alpha} \right)dr\\
&\mathop  = \limits^{\left( b \right)}\int_0^\infty exp\Big( -\pi \frac {r^2}{k^2} \lambda_F \rho \Big( \frac{\delta k^{\alpha}P_F}{\zeta P_R} \Big) \Big)exp\Big( -\pi r^2 \lambda_R \rho\Big(  \frac{\upsilon \delta }{\zeta}, \alpha \Big) \Big)2\pi r\Big( \frac{\lambda_F} {k^2} +\lambda_R\Big) exp\Big( -\pi r^2\Big( \frac{ \lambda_F}{k^2} +\lambda_R\Big)\Big)dr,
\end{aligned}
\end{equation}
\end{figure*}
where $\left( a \right)$ follows from the Laplace transform of $ {\left| h_{x_0} \right|}^2 \sim Exp \left( 1 \right)$ and the independence between intra-tier interference $I_F$ and inter-tier interference $I_R$; (b) follows from letting $s=\frac{\delta r^{\alpha}}{P_F}$ in the Laplace transforms of $I_F$ and $I_R$, the probability density function (PDF) of $r_{x_0}$ conditioned on the event $k{r_{x_0}}^ {-\alpha} \le {r_{y_0}}^ {-\alpha}$ is $f_F \left( r|k{r_{x_0}}^ {-\alpha} \le {r_{y_0}}^ {-\alpha} \right)=2\pi r\left( \lambda_F +k^2 \lambda_R\right)exp\left(-\pi r^2\left(\lambda_F +k^2 \lambda_R\right)\right)$.

Note that interference limited channel is considered since the interference is much larger than the noise, i.e., $\sigma^2 \to 0$, and $\rho \left( \delta,\alpha \right)=\int_{{\delta ^{ - 2 / \alpha}}}^\infty  {\frac{{{\delta ^{2/\alpha }}}}{{1 + {u^{\alpha /2}}}}} du$.

Therefore, the conditional success probability of typical user at F-APs can be expressed as
\setcounter{equation}{27}
\begin{equation}\label{file_probability}
S_F\left( \delta,\alpha \right)=\frac{\lambda_F+k^2 \lambda_R} {\lambda_F \left( 1+ \rho \left( \delta,\alpha \right) \right)+ k^2 \lambda_R \left( 1+ \rho \left( \frac{\delta P_R}{k^{\alpha}P_F},\alpha \right) \right) }.
\end{equation}

Otherwise, when typical user is associated with RRHs, i.e., $k{r_{x_0}}^ {-\alpha} > {r_{y_0}}^ {-\alpha}$, the conditional success probability of typical user at RRHs is given by (29) shown in the bottom at next page, where $\left( a \right)$ follows from the Laplace transform of $ {\left| h_{y_0} \right|}^2 \sim Exp \left( 1 \right)$ and the independence between intra-tier interference $I_R$ and inter-tier interference $I_F$; (b) follows from letting $s=\frac{\delta r^{\alpha}}{\zeta P_R}$ in the Laplace transforms of $I_F$ and $I_R$, the PDF of $r_{y_0}$ conditioned on the event $k{r_{x_0}}^ {-\alpha} > {r_{y_0}}^ {-\alpha}$ is $f_R \left( r|k{r_{x_0}}^ {-\alpha} > {r_{y_0}}^ {-\alpha} \right)=2\pi r\left( \frac{\lambda_F} {k^2} +\lambda_R\right) exp\left( -\pi r^2\left( \frac{ \lambda_F}{k^2} +\lambda_R\right)\right)$, and interference limited channel is considered, i.e., $\sigma^2 \to 0$.

Therefore, the conditional success probability of typical user at RRHs can be expressed as
\setcounter{equation}{29}
\begin{equation}\label{file_probability}
S_R\left( \delta,\alpha \right)=\frac{\lambda_F+k^2 \lambda_R} {\lambda_F \left( 1+ \rho \left( \delta_1,\alpha \right) \right)+ k^2 \lambda_R \left( 1+ \rho\left(  \delta_2, \alpha \right)  \right) }.
\end{equation}
where $\delta_1=\frac{\delta k^{\alpha}P_F}{\zeta P_R}$, and $\delta_2=\frac{\upsilon \delta }{\zeta}$.

As a result, utilizing (28) and (30), the successful delivery probability of typical user can be obtained.

\section{}
The successful delivery probability of typical user based on minimal delay association policy can be derived as follows
\begin{equation}\label{file_probability}
\begin{aligned}
 &S\left( \left\{\eta_i\right\},\alpha \right)=\mathbb{P} \Big( \bigcup\limits_{i \in \left\{ {{F_c},{{\tilde F}_c},R} \right\},{x_i} \in \left\{ {\Phi_i} \right\}} {D_{x_i}< \beta_i} \Big)\\
 &=\mathbb{E} \Big[ \pmb{1} \Big( \bigcup\limits_{i \in \left\{ {{F_c},{{\tilde F}_c},R} \right\},{x_i} \in \left\{ {\Phi_i} \right\}} {D_{x_i}< \beta_i} \Big) \Big]\\
&=\mathbb{E} \Big[ \pmb{1} \Big( \bigcup\limits_{i \in \left\{ {{F_c},{{\tilde F}_c},R} \right\},{x_i} \in \left\{ {\Phi_i} \right\}} {SINR_{x_i}> \eta_i} \Big) \Big],
 \end{aligned}
\end{equation}

It should be noted that the SINR threshold of typical user at F-APs is assumed to be $\eta_F>1$ and the SINR threshold of typical user at RRHs is assumed to be $\eta_R> \zeta / \upsilon$, where $\zeta$ and $\upsilon$ are the statistic of quantized coefficient and interference power coefficient, respectively. As a result, each user can be associated with at most one BS, and successful delivery probability can be translated into the sum of the successful probabilities that each BS connects to the typical user.

Therefore, the successful delivery probability of typical user based on minimal delay association policy can be simplified as
\begin{equation}\label{file_probability}
\begin{aligned}
&S\left( \left\{\eta_i\right\},\alpha \right)=\mathbb{E} \Big[ \sum\limits_{{x_R} \in \left\{ {\Phi_R} \right\}} \left[ \pmb{1} \left( {SINR_{x_R}> \eta_R} \right) \right]\Big]\\
&+\sum\limits_{i \in \left\{ {{F_c},{{\tilde F}_c}} \right\}} \mathbb{E}\Big[ \sum\limits_{{x_i} \in \left\{ {\Phi_i} \right\}} \left[ \pmb{1} \left( {SINR_{x_i}> \eta_i} \right) \right]\Big],
 \end{aligned}
\end{equation}
where the first part denotes the successful delivery probability of typical user at RRHs, and the second part denotes the successful delivery probability of typical user at F-APs with or without the requested content.

Specifically, the successful delivery probability of typical user at F-APs can be derived as
\begin{equation}\label{file_probability}
\begin{aligned}
&\sum\limits_{i \in \left\{ {{F_c},{{\tilde F}_c}} \right\}} \mathbb{E}\Big[ \sum\limits_{{x_i} \in \left\{ {\Phi_i} \right\}} \left[ \pmb{1} \left( {SINR_{x_i}> \eta_i} \right) \right]\Big]\\
&=\sum\limits_{i \in \left\{ {{F_c},{{\tilde F}_c}} \right\}} \lambda_i \int_{R^2} {{\cal L}_{I_{x_i}} \left( \eta_i r^{\alpha} {P_i}^{-1} \right)}exp\Big(-\frac{\eta_R r^\alpha \sigma^2}{ P_R}\Big)dr\\
&=\sum\limits_{i \in \left\{ {{F_c},{{\tilde F}_c}} \right\}} \lambda_i \int_{R^2} exp  \Big(-\pi r^2 {\left(\frac{\eta_i}{P_i}\right)}^{2/\alpha}C \left( \alpha \right)\\
&\times \sum\limits_{m \in \left\{ {{F_c},{{\tilde F}_c},R} \right\}} \lambda_m {P_m}^{2/{\alpha}}  \Big)exp\Big(-\frac{\eta_R r^\alpha \sigma^2}{P_R}\Big)dr,
\end{aligned}
\end{equation}
where $\eta_{{\tilde F}_c}=N_{{\tilde F}_c}\xi L+L/\left( {\beta_{{\tilde F}_c}}-D_{back} \right)$. Here, $D_{back}=d_{back}\rho'_{back}$ denotes the latency of backhaul links, $d_{back}=d$ is a constant, $\rho'_{back}=\sum\nolimits_{x \in {\Phi _{\tilde F_c}}} {N_{\tilde F_c} \xi L}$ is the traffic of backhaul links, $C \left( \alpha\right) =\left({2\pi/\alpha}\right) / sin\left({2\pi/\alpha}\right)$.

Similarly, the successful delivery probability of typical user at RRHs can be derived as
\begin{equation}\label{file_probability}
\begin{aligned}
&\mathbb{E}\Big[ \sum\limits_{{x_R} \in \left\{ {\Phi_R} \right\}} \left[ \pmb{1} \left( {SINR_{x_R}> \eta_R} \right) \right]\Big]\\
&=\lambda_R \int_{R^2} {{\cal L}_{I_{x_R}} \left( \eta_R r^{\alpha} {\zeta P_R}^{-1} \right)}exp\Big(-\frac{\eta_R r^\alpha \sigma^2}{\zeta P_R}\Big)dr\\
&=\lambda_R \int_{R^2} exp \Big( -\pi r^2 {\left(\frac{\eta_R}{\zeta P_R}\right)}^{2/\alpha}C \left( \alpha \right)\Big( \lambda_R \left(\upsilon  P_R\right)^{2/{\alpha}}\\
& + \sum\limits_{m \in \left\{ {{F_c},{{\tilde F}_c}} \right\}} \lambda_m {P_m}^{2/{\alpha}}  \Big)\Big) exp\Big(-\frac{\eta_R r^\alpha \sigma^2}{\zeta P_R}\Big)dr,
\end{aligned}
\end{equation}
where $\eta_{R}=N_{R}\xi L+L/\left( \beta_{R}-D_{front}\right)$, and $D_{front}=d_{front}\rho'_{front}$ denotes the latency of fronthaul links, $d_{front}=d$ is a constant, $\rho'_{front}=\sum\nolimits_{x \in {\Phi _R}} {N_R \xi L}$ is the traffic of fronthaul links.
\end{appendices}


\begin{thebibliography}{1}

\bibitem{I1}
C. V. N. I. Cisco,
\textquotedblleft Global mobile data traffic forecast update, 2013-2018, \textquotedblright~
\emph{white paper}, 2014.

\bibitem{I2}
R. Hattachi, J. Erfanian,
\textquotedblleft Next generation mobile networks (NGMN) alliance. [Online].\textquotedblright~
\emph{5G white paper}, 2015. Available: https://www.ngmn. org/uploads/media/NGMN 5G White Paper V1 0. pdf.

\bibitem{I3}
M. Peng, S. Yan, K. Zhang, and C. Wang,
\textquotedblleft Fog-computing-based radio access networks: Issues and challenges, \textquotedblright~
\emph{IEEE Netw.}, vol. 30, no. 4, pp. 46-53, Jul./Aug. 2016.

\bibitem{I4}
S. H. Park, O. Simeone, and S. Shamai,
\textquotedblleft Joint optimization of cloud and edge processing for fog radio access networks, \textquotedblright~
\emph{IEEE Trans. Wireless Commun.}, vol. 15, no. 11, pp. 7621-7632, Sep. 2016

\bibitem{I5}
 M. Tao, E. Chen, H. Zhou, and W. Yu,
\textquotedblleft Content-centric sparse multicast beamforming for cache-enabled cloud RAN, \textquotedblright~
\emph{IEEE Trans. Wireless Commun.}, vol. 15, no. 9, pp. 6118-6131, Sep. 2016.

\bibitem{I6}
Z. Zhao, M. Peng, Z. Ding, W. Wang, and H. V. Poor, \textquotedblleft Cluster content caching: An energy-efficient approach to improve quality of service in cloud radio access networks, \textquotedblright~
\emph{IEEE J. Sel. Areas Commun.}, vol. 34, no. 5, pp. 1207-1221, May. 2016.

\bibitem{I7}
 H. Hsu and K.-C. Chen,
\textquotedblleft A resource allocation perspective on caching to achieve low latency, \textquotedblright~
\emph{IEEE Commun. Lett.}, vol. 20, no. 1, pp. 145-148, Jan. 2016.

\bibitem{I8}
E. Bastug, M. Bennis,
\textquotedblleft Cache-enabled small cell networks: Modeling and tradeoffs, \textquotedblright~
\emph{EURASIP J. Wireless Commun. Netw.}, vol. 2015, no. 41, pp. 2-11, Feb. 2015.

\bibitem{I9}
S. Tamoor-ul-Hassan, M. Bennis,
\textquotedblleft Caching in wireless small cell networks: A storage-bandwidth tradeoff, \textquotedblright~
\emph{IEEE Commun. Lett.}, vol. 20, no. 6, pp. 1175-1178, Jun. 2016.

\bibitem{I10}
K. Li, C. Yang, Z. Chen, and M. Tao,
\textquotedblleft Optimization and analysis of probabilistic caching in N-tier heterogeneous networks,\textquotedblright~
\emph{IEEE Trans. Wireless Commun.}, vol. 17, no. 2, pp. 1283-1297, Feb. 2018.

\bibitem{I11}
Z. Chen, J. Lee, T. Q. S. Quek, and M. Kountouris,
\textquotedblleft Cooperative caching and transmission design in cluster-centric small cell networks, \textquotedblright~
\emph{IEEE Trans. Wireless Commun.}, vol. 16, no. 5, pp. 3401-3415, May. 2017.

\bibitem{I12}
S. Yan, M. Peng,
\textquotedblleft User access mode selection in fog computing based radio access networks, \textquotedblright~
\emph{in Proc. IEEE Int. Conf. Commun. (ICC)}, pp. 1-6, May. 2016.

\bibitem{I13}
S. Yan, M. Peng,
\textquotedblleft  An evolutionary game for user access mode selection in fog radio access networks, \textquotedblright~
\emph{IEEE Access}, vol. 5, pp. 2200-2210, Jan. 2017.

\bibitem{I14}
 A. Sengupta, R. Tandon,
\textquotedblleft Fog-aided wireless networks for content delivery: Fundamental latency tradeoffs, \textquotedblright~
\emph{IEEE Trans. Inf. Theory}, vol. 63, no. 10, pp. 6650-6678, Oct. 2017.

\bibitem{I15}
 C. Yang, Y. Yao,
\textquotedblleft Analysis on cache-enabled wireless heterogeneous networks, \textquotedblright~
\emph{IEEE Trans. Wireless Commun.}, vol. 15, no. 1, pp. 131-145, Jan. 2016.

\bibitem{I16}
S. Jia, Y. Ai,
\textquotedblleft Hierarchical content caching in fog radio access networks: Ergodic rate and transmit latency, \textquotedblright~
\emph{China Commun.}, vol. 13, no. 12, pp. 1-14, Dec. 2016.

\bibitem{I17}
M. Peng, Y. Li, T. Q. S. Quek, and C. Wang,
\textquotedblleft Device-to-device underlaid cellular networks under Rician fading channels, \textquotedblright~
\emph{IEEE Trans. Wireless Commun.}, vol. 13, no. 8, pp. 4247-4259, Aug. 2014.

\bibitem{I18}
Y. Cui, D. Jiang, and Y. Wu,
\textquotedblleft Analysis and optimization of caching and multicasting in large-scale cache-enabled wireless networks, \textquotedblright~
\emph{IEEE Trans. Wireless Commun.}, vol. 15, no. 7, pp. 5101-5112, Apr. 2016.

\bibitem{I19}
W. Xu,
\textquotedblleft Performance enhanced transmission in device-to-device communications: Beamforming or interference cancellation? \textquotedblright~
\emph{in Proc. IEEE Globecome}, pp. 4296-4301, 2012.

\bibitem{I20}
W. C. Cheung, T. Q. S. Quek, and M. Kountouris,
\textquotedblleft Throughput optimization, spectrum allocation, and access control in two-tier femtocell networks, \textquotedblright~
\emph{IEEE J. Sel. Areas Commun.}, vol. 30, no. 3, pp. 561-574, Apr. 2012.

\bibitem{I21}
J. G. Andrews, F. Baccelli,
\textquotedblleft A tractable approach to coverage and rate in cellular networks, \textquotedblright~
\emph{IEEE Trans. Commun.}, vol. 59, no. 11, pp. 3122-3134, Nov. 2011.

\bibitem{I22}
F. Kong, X. Sun,
\textquotedblleft Delay-optimal biased user association in heterogeneous networks,\textquotedblright~
\emph{IEEE Trans. Veh. Technol.}, vol. 66, no. 8, pp. 7360-7371, Aug. 2017.

\bibitem{I23}
H. S. Dhillon, R. K. Ganti,
\textquotedblleft Modeling and analysis of K-tier downlink heterogeneous cellular networks, \textquotedblright~
\emph{IEEE J. Sel. Areas Commun.}, vol. 30, no. 3, pp. 550-560, Apr. 2012.

\end{thebibliography}
\end{document}